\documentclass{article}
\usepackage[T1]{fontenc}
\usepackage[utf8]{inputenc}

\usepackage{amsmath}
\usepackage{amssymb}
\usepackage{hyperref}
\usepackage{amsthm}
\usepackage{amscd}
\usepackage{graphics}
\usepackage{soul}
\usepackage{thm-restate}
\usepackage[dvipsnames]{xcolor}
\usepackage{fouriernc}
\usepackage{yfonts}

\newcommand{\norm}[1]{\left|\left|#1\right|\right|}

\newcommand{\dom}{\operatorname{dom}}

\newcommand{\R}{\mathbb{R}}
\newcommand{\C}{\mathbb{C}}

\newcommand{\N}{\mathbb{N}}

\newcommand{\NN}{\N^{\N}}
\newcommand{\rank}{\operatorname{rank}}

\renewcommand{\O}{\mathcal{O}}
\newcommand{\A}{\mathcal{A}}
\newcommand{\V}{\mathcal{V}}

\newcommand{\K}{\mathcal{K}}

\newlength\arrowheight

\newcommand{\Set}[2]{\left\{#1 \mid #2\right\}}

\newcommand{\vlt}{\vartriangleleft}
\newcommand{\vleq}{\trianglelefteq}
\newcommand{\vgt}{\vartriangleright}
\newcommand{\vgeq}{\trianglerighteq}

\theoremstyle{definition}
\newtheorem{theorem}{Theorem}

\newtheorem{corollary}[theorem]{Corollary}
\newtheorem{proposition}[theorem]{Proposition}

\newtheorem{lemma}[theorem]{Lemma}

\begin{document}

\title{Termination of Real Linear Loops}

\author{
Eike Neumann
\qquad
Margret Tembo\\\\
Swansea University, UK\\
\texttt{\{e.f.neumann, m.k.tembo\}@swansea.ac.uk}
}
\date{}

\maketitle

\begin{abstract}
We study the problem of deciding universal termination of linear and affine loops over the reals in the bit-model of real computation.
We show that both problems are as close to decidable as one can expect them to be:
there exist sound partial algorithms that halt on all problem instances whose answer is robust under all sufficiently small perturbations.
We further show that in each case the set of non-robust problem instances has Lebesgue measure zero. 
\end{abstract}

\section{Introduction}

Discrete-time linear systems arise naturally in a wide range of application areas across systems and control theory and verification.
A fundamental liveness verification task is to decide whether every system state in a given set eventually escapes the set under the dynamics.
Here we consider the following instantiation of this problem: does every point in a given polyhedron $P$ escape $P$ under the iteration of a given matrix $A$?
Tiwari \cite{Tiwari} has shown that this problem is decidable when $A$ and $P$ are specified by exact rational data.

In applications, the system under consideration may not always be specifiable by exact rational data -- for example, it may not be possible to assume that the transition probabilities of a Markov chain be exactly known.
This leads to the question whether the problem remains decidable when the matrix or polyhedron are known only approximately, but with guaranteed error bounds.
Following \cite{DPLRSR21}, we model this situation by studying the problem for matrices and polyhedra specified by real data in the sense of computable analysis \cite{WeihrauchBook,CollinsMonograph,BrattkaHertling21}.  This line of work also connects to recent developments in exact real computation and program verification, such as the Hoare-logic framework of Park et al.~\cite{ParkEtAl2024}, which provides a natural setting for
verification problems over real data.
Further, Tiwari's method in general requires exact computation with algebraic numbers, in particular symbolic computation of the Jordan normal form, which is computationally very expensive (cf.~the discussion in \cite[p.80]{Tiwari}). We will solve the real-number version of the problem using validated numerical methods, which are potentially much more efficient.

Let us define the decision problem we consider more precisely.
For real vectors $v, w \in \R^m$ of the same dimension, we write $v \vgt w$ if $v_j > w_j$ for all $j \in \{1,\dots,m\}$.
The relations $v \vgeq w$, $v \vlt w$, and $v \vleq$ are defined analogously.
For a matrix $B \in \R^{m \times n}$, we let $P(B) = \Set{x \in \R^n}{B x \vgt 0}$ denote the open polyhedron defined by $B$.
We let $\overline{P}(B) = \Set{x \in \R^n}{B x \vgeq 0}$ denote the closed polyhedron defined by $B$.
Observe that, since we do not impose any constraints on $B$, the set $\overline{P}(B)$ is not necessarily equal to the closure of 
$P(B)$.
The Linear Universal Escape Problem is the following decision problem:
given matrices $A \in \R^{n \times n}$ and $B \in \R^{m \times n}$, decide whether for all $x \in \R^n$ there exists $k \in \N$ with $A^k x \notin P(B)$.

Since computability implies continuity, no non-trivial subset of Euclidean space has a computable characteristic function.
Thus, we cannot hope to decide any interesting problem over real data in the na\"ive sense of computing its characteristic function.
However, we can ask for the next best thing:
A subset $S$ of a represented space $X$ is called \emph{maximally partially decidable} \cite{DPLRSR21} if there exists an algorithm which takes as input an element $x \in X$, and either runs forever or halts in finite time. 
If it halts, it must report correctly whether $x$ belongs to $S$. 
Moreover, it is required to halt on all inputs $x \notin \partial S$.
Observe that if $X$ is discrete in the sense that equality for elements of $X$ is semidecidable, then a subset of $X$ is decidable in the usual sense if and only if it is maximally partially decidable. 
Thus, maximal partial decidability generalises decidability over discrete spaces -- generally in a more meaningful way than the na\"ive definition.
See \cite{DPLRSR21} for further discussion and motivation of this concept.
When we think of $S \subseteq X$ as a decision problem, we also refer to points $x \in X$ as \emph{instances of $S$}.
A point $x \in \partial S$ is called a \emph{boundary instance}.
A point $x \in X \setminus \partial S$ is called a \emph{robust instance}.

Our first main result is the following:

\begin{theorem}\label{Theorem: main theorem}
    The Linear Universal Escape Problem is maximally partially decidable.
\end{theorem}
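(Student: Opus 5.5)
We will exhibit two semi-decision procedures running in parallel: one that halts exactly on the robust non-terminating instances (the interior of the complement of $S$), and one that halts on the robust terminating instances (the interior of $S$), where $S$ is the set of pairs $(A,B)$ such that every $x$ escapes $P(B)$. Since the two interiors are disjoint open sets, any procedure of this kind is sound. It suffices that each procedure halts on every point of the corresponding interior, because $X\setminus\partial S$ is exactly the union of the two interiors.

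\textbf{Non-termination side.} $(A,B)\notin S$ means there is $x$ with $A^kx\in P(B)$ for all $k$. The first step is to show that robust non-termination has a finitary witness. Any robust non-terminating instance should admit a certificate of the following form: a nonzero $x$ and a polyhedral cone $C=\overline{P}(D)$ with $x\in C$, $AC\subseteq C$ (an invariant checkable by Farkas-type strict inequalities), and $C\setminus\{0\}\subseteq P(B)$. The route to such a certificate is a perturbation argument. Perturb $A$ slightly so that it has a simple real dominant eigenvalue $\lambda>0$, or more generally, after Jordan/real Schur analysis, a dominant invariant direction that attracts a cone. Robustness forces the eventual direction $v$ of a non-escaping orbit to satisfy $Bv\vgt 0$ strictly. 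Then a small cone around $v$ is $A$-invariant and lies inside $P(B)$. All conditions in the certificate are strict, so they are open and can be verified with rational approximations. The procedure enumerates rational candidates $(x,D)$ and checks them by interval arithmetic. The main work here is proving that every interior point of the complement has such a certificate. I expect this to go by approximating $A$ and using that near a robust instance the perturbed, generic instances are also non-terminating.

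\textbf{Termination side.} For $(A,B)$ in the interior of $S$, I would use the following compactness argument. By homogeneity, it suffices to consider $x$ in the unit sphere intersected with $\overline{P}(B)$. We want some $N$ such that every $x$ with $\|x\|=1$ satisfies $(BA^kx)_j\le 0$ for some $k\le N$ and some $j$ (closed version). I claim robustness implies the closed version: no nonzero $x$ has $A^kx\in\overline{P}(B)$ for all $k$. Otherwise a small perturbation of $B$ (shifting it to $B+\varepsilon B'$, with $B'$ chosen so that $B'x\vgt0$ along the orbit direction) would produce a non-escaping point, contradicting robustness. Note the sequence $A^kx/\|A^kx\|$ needs care when $A^kx=0$, but then escape is automatic since $P(B)$ is open and does not contain $0$. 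Given the closed version, compactness of the sphere gives a uniform $N$ with strict margins. The procedure then searches for $N$ and verifies $\bigcap_{k\le N}\overline{P}(BA^k)=\{0\}$ robustly, for instance by checking with interval arithmetic that a strict LP-infeasibility (Farkas) certificate exists. This is an open condition, so it is semi-decidable.

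The main obstacle is the non-termination side, specifically proving that every robust non-escaping instance has an open, finitely verifiable witness. This needs the spectral analysis of which eigenvalues dominate along non-escaping orbits, including complex and repeated eigenvalues, and the argument that these can be avoided or handled under perturbation. The other direction is also delicate: showing that closed-version non-termination is a limit of open-version non-termination needs a careful choice of perturbation. This is the step I would verify first.
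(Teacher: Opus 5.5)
Your overall architecture, running a semidecision procedure for the interior of the escaping set in parallel with one for the interior of the trapped set, matches the paper's. The gap is that neither certificate you propose exists on all robust instances, so neither procedure halts everywhere it must.

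\textbf{Termination side.} The claim that robust escape forces every nonzero orbit to leave $\overline{P}(B)$ is false. Take $n=2$, $A=-I$, $B=\begin{pmatrix}1&0\end{pmatrix}$. Every $\widetilde A$ near $A$ has only eigenvalues near $-1$. By Corollary~\ref{Corollary: matrix iterations without positive real eigenvalues}, every orbit then has $\widetilde B\widetilde A^kx\le 0$ for infinitely many $k$, so the instance is robustly escaping (it satisfies \eqref{eq: robust escaping condition} vacuously). Yet $x=e_2$ has $BA^kx=0$ for all $k$. Hence $\bigcap_{k\le N}\overline{P}(BA^k)\supseteq\R e_2$ for every $N$, and your search never halts. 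Your perturbation step breaks because $A^kx/\norm{A^kx}$ need not converge: here it alternates between $\pm e_2$, and no row can be positive on both. On the part of the space without nonnegative real eigenvalues, Bell--Gerhold is all you have, and it may give nothing better than $B_jA^kx=0$. The paper therefore quantifies only over unit eigenvectors $v$ for $\lambda\in\sigma_{\geq 0}(A)$, which form a compact set, and requires the open condition $\exists j.\,B_jv<0$. Sufficiency then uses Lemma~\ref{Lemma: linear map with positive real eigenvalues} on the positive-eigenvalue part and Bell--Gerhold on the rest.

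\textbf{Non-termination side.} A robustly invariant cone $C$ with $C\setminus\{0\}\subseteq P(B)$ need not exist, and ``dominant'' is the wrong notion. Take $A=\mathrm{diag}(2,1)$ and $B=\begin{pmatrix}0&1\end{pmatrix}$. This instance is robustly trapped: under perturbation the simple eigenvalue $1$ remains a simple real eigenvalue with an eigenvector near $e_2$, which stays in $P(\widetilde B)$. However, any closed $A$-invariant cone containing a point with $x_1\neq 0$ contains $(\pm1,0)$, on which $B$ vanishes. So the only admissible cone is the ray $\R_{\geq 0}e_2$. Its invariance is an equation, not an open condition, and it already fails for $A+\delta e_1e_2^T$. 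Your claim that robustness forces the limit direction of a non-escaping orbit to satisfy $Bv\vgt 0$ also fails here: $x=(1,1)$ is trapped, with limit direction $e_1$.

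The missing idea is that robust trappedness comes from a positive real eigenvalue of odd algebraic multiplicity, dominant or not. Such an eigenvalue survives perturbation because the characteristic polynomial changes sign there, which is what makes $\sigma^{\mathrm{odd}}_{>0}(A)$ computably overt. In addition, all of its unit eigenvectors must satisfy $Bv\vgt 0$ or $Bv\vlt 0$. The trapped point is that eigenvector, and it moves with the perturbation, so no fixed cone certifies it. With this spectral characterisation (Theorem~\ref{Theorem: characterisation of robust linear instances}), both sides become formulas quantifying universally over compact sets and existentially over overt sets, with open predicates, hence semidecidable. The converse directions, showing that failure of each formula yields nearby instances of the other kind, are where the perturbation lemmas of Section~\ref{Section: perturbation lemmas} are needed.
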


Tiwari's decision method \cite{Tiwari} for the universal escape problem for rational inputs relies on the computation of the Jordan normal form, which is well-known to be uncomputable for real matrices for continuity reasons (cf.~\cite{BrattkaZieglerLA}).
Hence, we pursue an entirely different approach:
We characterise the robust problem instances via first-order formulas, where all universal quantification ranges over compact sets, and all existential quantification ranges over overt sets. This immediately yields a maximal partial algorithm.

To state our characterisation results, let us introduce some notation.
For a matrix $A \in \R^{n \times n}$, we let $\sigma(A) \subseteq \C$ denote its spectrum.
For a real number $r$, we let $\sigma_{\geq r}(A) = \sigma(A) \cap [r, +\infty)$
and $\sigma_{> r}(A) = \sigma(A) \cap (r, +\infty)$.
We let $\sigma^{\text{odd}}_{> r}(A)$ denote the eigenvalues in $\sigma_{> r}(A)$ with odd algebraic multiplicity. We let $A_j$ denote the $j^{\text{th}}$ row of $A$. Let $S^{n-1} \subseteq \R^n$ denote the (Euclidean) unit sphere.
We call an instance \((A,B)\) \emph{trapped} if some \(x\in P(B)\) satisfies
\(A^k x\in P(B)\) for all \(k\geq 0\) and \emph{escaping} otherwise. Robust escaping and robust trapped
instances are those lying in the interiors of the escaping and trapped sets,
respectively.

\begin{theorem}\label{Theorem: characterisation of robust linear instances}
    Let $\left(A, B\right) \in \R^{n \times n} \times \R^{m \times n}$ be an instance of the Linear Universal Escape Problem.
    Then:
    \begin{enumerate}
        \item \label{Item: robust linear escaping} $(A,B)$ is a robust escaping instance if and only if it satisfies the formula
            \begin{equation}\label{eq: robust escaping condition}
            \forall\lambda\in\sigma_{\geq 0}(A).\forall v\in S^{n-1}.
            \left(
            A v=\lambda v\;\Longrightarrow\;
            \exists j\in\{1,\dots,m\}. B_j v<0
            \right).
            \end{equation}
        \item \label{Item: robust linear trapped} $(A,B)$ is a robust trapped instance if and only if it satisfies the formula
            \begin{equation}\label{eq: robust trappedness condition}
                \exists \lambda\in\sigma^{\text{odd}}_{>0}({A}). \forall v\in S^{n-1}.
                \left(
                A v=\lambda v\;\Longrightarrow\;
                \left(B v\vgt0 \lor B v\vlt 0\right)
                \right).
            \end{equation}
    \end{enumerate}
\end{theorem}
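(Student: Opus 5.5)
The plan is to first reduce trappedness to a purely spectral/conic statement, then prove the two items by continuity and perturbation arguments. The reduction is a standard fixed-point argument. If $(A,B)$ is trapped, let $K$ be the set of points that are never-escaping. Its closure $\overline{K}$ is then an $A$-invariant closed convex cone inside $\overline{P}(B)$, and it is pointed, because $P(B)$ contains no line. Projectivising and applying Brouwer (or the Perron--Frobenius theorem for cone-preserving maps) gives $v \in \overline{K}\cap S^{n-1}$ and $\lambda\ge 0$ with $Av=\lambda v$ and $Bv\vgeq 0$.

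Conversely, suppose $Av=\lambda v$ with $\lambda>0$ and $Bv\vgt 0$. Then $A^k v=\lambda^k v\in P(B)$ for all $k$, so the instance is trapped. The same holds with $Bv\vlt0$, using $-v$.

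For item~\ref{Item: robust linear escaping}, start with sufficiency. Formula \eqref{eq: robust escaping condition} is an open condition in $(A,B)$: the set of pairs $(\lambda,v)$ with $\lambda\ge0$, $v\in S^{n-1}$, $Av=\lambda v$ varies upper semicontinuously with $A$, and the conclusion is strict. So \eqref{eq: robust escaping condition} holds on a neighbourhood of $(A,B)$. By the reduction, every instance in that neighbourhood is escaping, since a trapped one would have a nonnegative eigenvector with $Bv\vgeq0$.

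For necessity, suppose some nonnegative eigenpair has $Bv\vgeq 0$. We perturb to make it trapped. Replace $B$ by $B+\varepsilon\, \mathbf{1}v^T$, which makes $Bv\vgt0$. If $\lambda=0$, replace $A$ by $A+\delta vv^T$, which makes $v$ an eigenvector with eigenvalue $\delta>0$. Then $(A,B)$ is a limit of trapped instances, so it is not robust escaping.

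For item~\ref{Item: robust linear trapped}, again start with sufficiency. A real eigenvalue $\lambda>0$ of odd algebraic multiplicity persists under perturbation as a nearby real eigenvalue. The reason is that the characteristic polynomial changes sign across the cluster, and the perturbed eigenvalues near $\lambda$ have total multiplicity equal to that odd number, so at least one is real. The eigenvectors of the perturbed matrix for eigenvalues near $\lambda$ lie close to $\ker(A-\lambda I)\cap S^{n-1}$, by compactness and upper semicontinuity. On that compact set every $v$ satisfies $Bv\vgt0$ or $Bv\vlt0$, and in fact the sign is uniform on each connected component, with $\pm v$ swapping it. This condition is open, so nearby instances have an eigenvector $w$ with $Bw\vgt 0$ after possibly negating it, and hence are trapped.

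For necessity, suppose \eqref{eq: robust trappedness condition} fails. We must perturb to an escaping instance, and I expect this to be the main obstacle. The steps I would try are:
\begin{itemize}
\item First perturb $A$ generically, so that all eigenvalues are simple and all eigenspaces are one-dimensional. For such $A$, failure of the formula means that every positive real eigenvector $v$ has $Bv$ neither $\vgt0$ nor $\vlt0$.
\item Perturb $B$ slightly so that each such $v$ has some $B_jv<0$ and some $B_iv>0$. The same perturbation should make the $0$-eigenvector (if any) also fail $\pm Bv\vgeq 0$, via a small row change.
\item Apply item~\ref{Item: robust linear escaping}: condition \eqref{eq: robust escaping condition} now holds, so the instance is escaping.
\end{itemize}
The delicate points are twofold. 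The generic perturbation of $A$ must not create positive real eigenvectors landing strictly inside $P(B)$; eigenvalues of even multiplicity can be split into complex pairs, which is exactly where oddness enters. The case $\lambda=0$ must also be handled. I would do both with careful Jordan-block-local perturbations combined with continuity in $B$.
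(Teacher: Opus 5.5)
\textbf{There is a genuine gap in the necessity half of item~2.} Your plan is to perturb into the region where \eqref{eq: robust escaping condition} holds and then invoke item~1, i.e.\ to reach a \emph{robustly} escaping instance. That target is unreachable in general.

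With one row ($m=1$), \eqref{eq: robust escaping condition} can hold only if $\sigma_{\geq 0}(A)=\emptyset$, since $v$ and $-v$ would both need $B_1v<0$. And, as you argue yourself, a positive eigenvalue of odd multiplicity survives every small perturbation. Concretely, take $n=2$, $A=\mathrm{diag}(2,-1)$, $B=(0\;\;1)$. Then:
\begin{itemize}
\item formula \eqref{eq: robust trappedness condition} fails, because the only positive eigendirection $e_1$ has $Be_1=0$;
\item your second step (``some $B_jv<0$ and some $B_iv>0$'') is impossible with one row;
\item no instance near $(A,B)$ satisfies \eqref{eq: robust escaping condition}, and generic neighbours are in fact trapped.
\end{itemize}
The instance is not robustly trapped only because it is itself escaping, non-robustly: $BA^kx=(-1)^kx_2$.

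So a direct escape argument is needed here. The paper handles this in its case of pairwise dependent rows. It keeps $B_jv_\ell=0$ and perturbs $A$ so that the witnesses span the only positive eigenspaces. It then shows $B_j\widetilde A^kx=B_j\widetilde A^kr$, with $r$ in an invariant complement without positive eigenvalues. This sequence is non-positive infinitely often by Theorem~\ref{Theorem: LRS without positive real eigenvalues}.

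For $m\geq 2$ your route can be completed: perturb rows of $B$ along functionals that separate the linearly independent witnesses, and shift the eigenvalue $0$ to $-\delta$.

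In both cases, the step you defer to ``careful Jordan-block-local perturbations'' is the real content, and it is missing. The perturbation must be non-generic. It has to keep each odd-multiplicity positive eigenvalue simple, with eigenspace spanned by the \emph{chosen witness*}, and turn all other positive real spectrum into complex pairs. A generic splitting of $2I_3$ into three real eigenvalues can produce an eigenvector in $P(B)$. This is exactly Lemma~\ref{Lemma: trapped point perturbation lemma}.

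\textbf{Your reduction also contains a false step.} $\overline K$ need not be pointed, because $P(B)$ contains lines whenever $\ker B\neq 0$. For $A=I_2$, $B=(1\;\;0)$ one gets $\overline K=\{x_1\geq 0\}$. The cone fixed-point argument fails for non-pointed cones: a rotation preserves $\R^2$ but has no real eigenvector.

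This is repairable. The lineality space $L=\overline K\cap(-\overline K)\subseteq\ker B$ is $A$-invariant. The pointed-cone argument in $\R^n/L$ yields $v\in\overline K\setminus L$ and $\lambda\geq 0$ with $Av-\lambda v\in L$. Then one of two things holds:
\begin{itemize}
\item $\lambda\in\sigma(A|_L)$, and a real eigenvector in $L$ has $Bv=0$;
\item otherwise $v-(A-\lambda I)|_L^{-1}(Av-\lambda v)\in\overline K$ is a genuine eigenvector.
\end{itemize}
With this fix, your cone argument is a legitimate and more elementary substitute for Bell--Gerhold in item~1. The rest of item~1 and the sufficiency half of item~2 are sound.
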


So far, we have focused on (homogeneous) linear systems.
It is natural to consider the same problem for affine systems.
The Affine Universal Escape Problem is the following decision problem:
Given $A \in \R^{n \times n}$, $b \in \R^n$, $B \in \R^{m \times n}$, $\eta \in \R^m$,
decide whether every point escapes the affine polyhedron $P(B,\eta) = \Set{x \in \R^n}{Bx \vgt \eta}$ under iteration of the map 
$f(x) = Ax + b$ in the sense that for all $x \in \R^n$ there exists $k \in \N$ with $f^k(x) \notin P(B,\eta)$.
Over discrete data, the Linear Universal Escape Problem is a special case of the Affine Universal Escape Problem.
Further, the Affine Universal Escape Problem reduces straightforwardly to its homogeneous linear counterpart via homogenisation.
Explicitly, given an instance $(A, B, b, \eta)$ of the Affine Universal Escape Problem, we can effectively compute the homogenised instance $\left(\widehat{A}, \widehat{B}\right) \in \R^{(n+1)\times(n+1)} \times \R^{(m+1) \times (n+1)}$, given as
\[
    \widehat{A} = 
    \begin{pmatrix}
        A   & b \\
        0   & 1     
    \end{pmatrix},
    \;
    \widehat{B} = 
    \begin{pmatrix}
        B          & -\eta \\
        0          & 1
    \end{pmatrix}.
\]
It is easy to see that $(A, B, b, \eta)$ escapes if and only if $\left(\widehat{A}, \widehat{B}\right)$ escapes.
However, in order for this mapping to be a reduction in the sense of maximal partial decidability, it needs to map robust instances of the Affine Universal Escape Problem to robust instances of the Linear Universal Escape Problem.
This turns out not to be the case, even for one-dimensional instances\footnote{Take for example $A = 1/2$, $b = -1$, $B = 1$, and $\eta = 0$.}.
Similarly, it is easy to see that there exist robust instances $(A,B)$ of the Linear Universal Escape Problem such that the instance $(A, 0, B, 0)$ is not a robust instance of the Affine Universal Escape Problem\footnote{Take for example $A = 1/2$ and $B = 1$.}.
Thus, while the two problems are closely related, there is no obvious reduction
in either direction when it comes to maximal partial decidability.
Nonetheless, we obtain:

\begin{theorem}\label{Theorem: main theorem 2}
    The Affine Universal Escape Problem is maximally partially decidable.
\end{theorem}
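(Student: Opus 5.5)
Since the homogenisation map does not preserve robustness, I will not reduce to Theorem~\ref{Theorem: main theorem}. Instead I will mirror the strategy behind Theorem~\ref{Theorem: characterisation of robust linear instances}: characterise robust escaping and robust trapped affine instances by first-order formulas in which universal quantifiers range over compact sets and existential quantifiers over overt sets. Running the two semi-decision procedures in parallel then gives a maximal partial algorithm. The characterisation will be phrased in terms of the homogenised pair $(\widehat{A},\widehat{B})$, but with conditions tailored to the affine perturbation topology. There, $\widehat{A}$ keeps its last row $(0,1)$ and $\widehat{B}$ its last row $(0,1)$ under perturbation; only the blocks $A,b,B,\eta$ move.

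\textbf{Step 1 (escaping side).} A non-escaping orbit of $\widehat{A}$ in $P(\widehat{B})$ has an asymptotic direction. I will show, as in the linear case, that trappedness yields a nonnegative eigenvector $\hat v=(v,t)$ of $\widehat{A}$ with eigenvalue $\lambda\ge 0$ and $\widehat{B}\hat v\vgeq 0$. Here $t\ge 0$, and we have either $t>0$ (a genuine fixed-point-type direction, which forces $\lambda=1$) or $t=0$ (an eigenvector of $A$ with $Bv\vgeq0$ at infinity). My candidate for robust escape is: for every $\lambda\in\sigma_{\ge0}(\widehat A)$ and every unit $\hat v=(v,t)$ with $t\ge0$ and $\widehat A\hat v=\lambda\hat v$, some row satisfies $B_jv-\eta_jt<0$. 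The essential difference from the linear formula is that the row $t>0$ is \emph{not} allowed as the witness of strict violation. This is exactly what the footnote example $A=1/2$, $b=-1$, $B=1$, $\eta=0$ exhibits: the fixed point $-2$ violates $Bx>0$ strictly, whereas the linear formula would wrongly demand more. The quantified set $\{\hat v\in S^{n}: t\ge 0,\ \widehat A\hat v=\lambda\hat v\}$ is compact and computable from the data, and $\sigma_{\ge0}$ is compact, so the formula is semidecidable. Soundness and the converse (failure implies a boundary instance) will follow by a perturbation argument as in the linear case. If some eigenvector satisfies $\widehat B\hat v\vgeq0$, I will perturb $\eta$ downward, or perturb $A$, to make it strictly inside and obtain a trapped instance nearby.

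\textbf{Step 2 (trapped side).} Robust trappedness should be witnessed by an eigenvalue of $\widehat A$ of odd multiplicity, which persists as a real eigenvalue under perturbation, whose eigenspace lies strictly inside $P(\widehat B)$ or its negative in the affine sense. I will treat two cases. First, a fixed point $x^*$ with $Ax^*+b=x^*$ and $Bx^*\vgt\eta$, which exists robustly when $1\notin\sigma(A)$. Second, a $\lambda\in\sigma^{\text{odd}}_{>0}(A)$ with $\lambda>1$ whose eigenvectors satisfy $Bv\vgt0$ or $Bv\vlt0$; then far-out points along $v$ stay in the polyhedron, since the translation by $b$ is dominated. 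Existential quantifiers over odd-multiplicity real eigenvalues range over an overt set, as in the linear case.

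\textbf{Step 3 (main obstacle).} The hardest step is the converse: showing that every instance that is trapped but fails the Step~2 formula lies in the closure of the escaping instances, and likewise for Step~1. I expect to adapt the linear proof by decomposing $\widehat A$ into its generalised eigenspaces and perturbing $A$ so that its dominant real eigenvalues become simple, or move into complex pairs. The extra difficulty is the interplay between eigenvalue $1$ of $\widehat A$ coming from the homogenising coordinate and eigenvalues of $A$ near $1$. I would first handle this by perturbing $b$ and $\eta$ alone, which moves fixed points continuously, and only then perturb $A$. Finally, I will check that every instance outside both formulas is a boundary instance, which yields maximal partial decidability.
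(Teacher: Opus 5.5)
There is a genuine gap in Step 1: your robust-escape formula is the wrong one, and the procedure built from it is not maximal. Your modification of the linear criterion is vacuous. For an eigenvector $\widehat v=(v,t)$ with $t<0$, the last row of $\widehat B$ already gives $\widehat B_{m+1}\widehat v=t<0$. For $t\geq 0$, that row can never witness a strict violation. So restricting to $t\geq0$ and forbidding row $m+1$ changes nothing. Your formula is literally \eqref{eq: robust escaping condition} applied to $(\widehat A,\widehat B)$, i.e.\ exactly the homogenised linear criterion the footnote warns against.

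Take your own example $A=1/2$, $b=-1$, $B=1$, $\eta=0$. The matrix $\widehat A$ has the eigenvalue $1/2\in\sigma_{\geq0}(\widehat A)$ with eigenvector $\widehat v=(1,0)$. So $t=0$ and $Bv-\eta t=1>0$, and your formula fails. Yet the instance is robustly escaping: for all nearby data $(a,b',\beta,\eta')$, every orbit of $x\mapsto ax+b'$ converges to $b'/(1-a)\approx-2$. That point lies outside the closed set $\{x : \beta x\geq\eta'\}$. The instance is escaping, so your trapped semi-decision does not halt on it either, and your combined algorithm diverges on a robust instance.

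You misread the example. The linear criterion fails there because of the contracting eigenvalue $1/2$, not because of the fixed-point direction $\pm(-2,1)$, which is harmless. For the same reason, your planned converse (perturb $\eta$ or $A$ to put the offending eigenvector strictly inside) fails for eigenvalues in $[0,1)$. In this example no nearby instance is trapped at all.

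The missing idea is that only $\lambda\in\sigma_{\geq1}(\widehat A)$ matters, which is what \eqref{eq: robust affine escape} encodes. The functional $e_{n+1}^T$ satisfies $e_{n+1}^T\widehat A=e_{n+1}^T$, so it vanishes on every generalised eigenspace of $\widehat A$ for an eigenvalue $\neq1$. Every $x\in P(\widehat B)$ has $t>0$, hence a non-zero component in the generalised eigenspace for $1$. Lemma \ref{Lemma: linear map with positive real eigenvalues} then shows that the normalised orbit converges to an eigenvector for some $\rho_M\geq1$. Components for eigenvalues in $(0,1)$ are dominated, and the rest is handled by Corollary \ref{Corollary: matrix iterations without positive real eigenvalues}.

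Conversely, for $\lambda\geq1$ the perturbation argument does work:
\begin{itemize}
\item If $t>0$, then $\lambda=1$ and $\widehat v$ gives a fixed point, which a small shift of $\eta$ puts inside the polyhedron.
\item If $t=0$, perturb $A$ so that $1\notin\sigma(\widetilde A)$ and $v$ is an eigenvector for some $\widetilde\lambda>1$. After perturbing $B$ to get $Bv\vgt0$, the points $x^*+sv$ with $s$ large are trapped.
\end{itemize}

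Your Step 2 formula agrees with \eqref{eq: robust affine trapped}. In Step 3, however, perturbing only $b$ and $\eta$ cannot handle $1\in\sigma(A)$: there is then no unique fixed point to move continuously. You need a perturbation of $A$ itself, as in Lemma \ref{Lemma: perturbing matrices with 1 in the spectrum}. It must remove $1$ from the spectrum, keep the spectral data for $|\lambda|>1$, and place the new fixed point outside $\overline{P}(B,\widetilde\eta)$.
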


Theorem \ref{Theorem: main theorem 2} is based on a characterisation of the robust instances, similar to Theorem \ref{Theorem: characterisation of robust linear instances}:

\begin{theorem}\label{Theorem: characterisation of robust affine instances}
    Let $(A, B, b, \eta) \in \R^{n \times n} \times \R^{m \times n} \times \R^n \times \R^m$ be an instance of the Affine Universal Escape Problem.
    Then:
    \begin{enumerate}
        \item \label{Item: robust affine escaping} $(A, B, b, \eta)$ is a robust escaping instance if and only if it satisfies the formula
        \begin{equation}\label{eq: robust affine escape}
            \forall \lambda \in \sigma_{\geq 1}\left(\widehat{A}\right).
            \forall \widehat{v} \in S^{n}. 
            \left(
                \widehat{A}\widehat{v} = \lambda \widehat{v}
                \;\Longrightarrow\;
                \exists j\in\{1,\dots,m+1\}. \widehat{B}_j \widehat{v} < 0
            \right).
        \end{equation}
        \item \label{Item: robust affine trapped} $(A, B, b, \eta)$ is a robust trapped instance if and only if it satisfies the formula
        \begin{align}\label{eq: robust affine trapped}
            \begin{split}
            &\left( 1 \notin \sigma(A) \land B \cdot (A - I)^{-1} b \vlt -\eta \right)\\
            &\lor
            \exists \lambda\in\sigma^{\text{odd}}_{> 1}(A).
            \forall v\in S^{n-1}. 
            \left(
            A v=\lambda v\;\Longrightarrow\;
            \left(B v\vgt0 \lor B v\vlt0\right)
            \right).
            \end{split}
        \end{align}
    \end{enumerate}
\end{theorem}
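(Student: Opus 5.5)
We propose to prove Theorem~\ref{Theorem: characterisation of robust affine instances} by transferring Theorem~\ref{Theorem: characterisation of robust linear instances} through the homogenisation $(\widehat{A},\widehat{B})$, while keeping track of the fact that the perturbations allowed for affine instances keep the last row of $\widehat{A}$ equal to $(0,\dots,0,1)$ and the last row of $\widehat{B}$ equal to $(0,\dots,0,1)$. By the remark above, $(A,B,b,\eta)$ escapes iff $(\widehat{A},\widehat{B})$ escapes. So the escaping and trapped sets of the affine problem are preimages of the corresponding linear sets under the embedding $\iota$ onto this structured affine subspace. Robustness, however, has to be re-proved inside that subspace, because the subspace is not open in $\R^{(n+1)\times(n+1)}\times\R^{(m+1)\times(n+1)}$. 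Spectrally, $\sigma(\widehat{A})=\sigma(A)\cup\{1\}$. An eigenvector $\widehat v=(v,t)$ of $\widehat A$ with $t\neq 0$ forces $\lambda=1$ and $(A-I)v=-tb$. For $\lambda\neq 1$, every eigenvector has $t=0$ and $v$ is an eigenvector of $A$; in that case the last row of $\widehat B$ gives $\widehat B_{m+1}\widehat v=0$.

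For item~\ref{Item: robust affine escaping}, sufficiency works as follows. Formula~\eqref{eq: robust affine escape} is an open condition: the eigenpairs with $\lambda\ge1$ vary upper-semicontinuously (compactness of $S^n$ together with continuity of the spectrum), and strict inequalities persist under small perturbations. It therefore suffices to show that the formula implies escape. I would reuse the argument from the linear case, namely the ``if'' direction of item~\ref{Item: robust linear escaping}, restricted to eigenvalues $\ge 1$. The point is that a trapped point $\widehat x=(x,1)$ lies in $\widehat P$ with last coordinate $1$, so it is never attracted to $0$. Along the orbit, the normalised directions $\widehat A^k\widehat x/\|\widehat A^k\widehat x\|$ accumulate on eigenvectors of the dominant eigenvalue of modulus $\ge1$. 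Rotating or non-real dominant parts are excluded by the standard argument that they cannot stay in an open cone, and these limits lie in $\overline P(\widehat B)$, contradicting~\eqref{eq: robust affine escape}.

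Necessity of~\eqref{eq: robust affine escape} is shown by contrapositive. Suppose some $\widehat v$ with $\lambda\ge1$ satisfies $\widehat B\widehat v\vgeq0$. We then perturb within the structured space to make the instance trapped. If $t\neq0$ and $\lambda=1$, then $v$ is a fixed point of $f$ with $Bv\vgeq\eta$, and perturbing $\eta$ downward makes it a trapped fixed point. If $t=0$, then $Av=\lambda v$ with $Bv\vgeq0$ and $\lambda\ge1$. We perturb $A$ to make $\lambda$ strictly larger than $1$ and dominant along $v$, and perturb $B$ so that $Bv\vgt0$. Starting far out along $v$ then gives an orbit that stays in $P(B,\eta)$.

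For item~\ref{Item: robust affine trapped}, I would first check that both disjuncts are open and imply trappedness. The first disjunct gives a fixed point $x^*=-(A-I)^{-1}b$ with $Bx^*\vgt\eta$, which is open in the data. The second follows the proof of item~\ref{Item: robust linear trapped}: an eigenvalue $\lambda>1$ of odd multiplicity persists as a real eigenvalue under perturbation, and the cone condition $Bv\vgt0$ or $Bv\vlt0$ holds uniformly on the perturbed eigenspace. Taking $x=sv$ with $s$ large and of the appropriate sign, the $\lambda^k$ term dominates both $b$ and $\eta$. For the converse, I would show that a trapped instance failing both disjuncts is a limit of escaping ones, via the contrapositive of item~\ref{Item: robust affine escaping}'s sufficiency direction. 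We perturb so that every fixed point violates a strict inequality: either $1\in\sigma(A)$, which we perturb away, or we nudge $\eta$ if some $B_jx^*=\eta_j$. Eigenvalues $>1$ of even multiplicity are split into complex pairs, and those of odd multiplicity whose eigenspaces meet the boundary of the cone are treated as in the linear proof. Eigenvalues in $[0,1)$ are irrelevant, since they do not appear in~\eqref{eq: robust affine escape} except through $\lambda=1$. After these perturbations, \eqref{eq: robust affine escape} holds, so the perturbed instance escapes. I expect this last step to be the main obstacle: all of these perturbations must be carried out simultaneously inside the structured subspace, while ensuring that $\lambda=1$, which is always an eigenvalue of $\widehat A$, contributes no bad eigenvector. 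The condition $B(A-I)^{-1}b\vlt-\eta$ is exactly what controls that eigenvector, and I would handle it by first perturbing $\eta$ and only then applying the linear-case perturbation lemmas to $A$ and $B$.
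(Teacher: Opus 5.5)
Two of your steps fail. One can be repaired. The other cannot, because the statement it aims at is false.

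Your trapped witnesses ignore the drift coming from $b$. One has $f^k(sv)=s\lambda^kv+\sum_{i<k}A^ib$, and the sum can grow like $|\mu|^k$ for an eigenvalue with $|\mu|>\lambda$. No small perturbation makes $\lambda$ dominant. Take $A=\mathrm{diag}(2,-5)$, $b=(0,1)^T$, $B=(1\;\,{-1})$, $\eta=0$. The second disjunct of \eqref{eq: robust affine trapped} holds with sole witness $\lambda=2$, $v=(1,0)^T$. Yet $Bf^k(sv)=2^ks-(1-(-5)^k)/6\to-\infty$ along odd $k$, for every $s$. The same objection applies to ``starting far out along $v$'' in the converse of item~1.

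The missing device, which the paper uses, is to shift by a fixed point. In item~1, perturb $A$ so that also $1\notin\sigma(\widetilde A)$, and start at $x^*+sv$; then $f^k(x^*+sv)=x^*+s\widetilde\lambda^kv$ exactly. In item~2, take $x^*$ in the complement $E_2$ of the generalised $1$-eigenspace $E_1$ with $f(x^*)-x^*\in E_1$; then $f^k(x^*+v)-\lambda^kv$ grows only polynomially.

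Your sufficiency argument for item~1 has a related problem. A rotating dominant component \emph{can} stay in the open cone when it lies in $\ker\widehat B$, so the ``dominant direction'' argument is not right as stated. What works is the paper's split. The part in the positive-real generalised eigenspaces is nonzero, because the row $(0,\dots,0,1)$ is a left $1$-eigenvector of $\widehat A$, so the last coordinate lives in the generalised $1$-eigenspace. That part is handled by Lemma~\ref{Lemma: linear map with positive real eigenvalues}. The rest is handled by Corollary~\ref{Corollary: matrix iterations without positive real eigenvalues}, regardless of dominance.

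The more serious problem is your plan for the converse of item~2: it cannot be completed, because that direction is false as stated. Take $n=m=1$, $A=1$, $b=1$, $B=1$, $\eta=0$. Then $f(x)=x+1$ and $P(B,\eta)=(0,\infty)$. Formula \eqref{eq: robust affine trapped} fails, since $1\in\sigma(A)$ and $\sigma^{\text{odd}}_{>1}(A)=\emptyset$. Yet every nearby instance $(a,c,\beta,e)$ is trapped:
\begin{itemize}
\item if $a<1$, the fixed point $\beta/(1-a)$ is huge and positive, hence lies in $\{cx>e\}$;
\item if $a\geq 1$, every $x>\max(0,e/c)$ has an increasing orbit.
\end{itemize}

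This is exactly where ``$1\in\sigma(A)$, which we perturb away'' breaks. Pushing the eigenvalue below $1$ throws the fixed point deep into $P$. Pushing it above $1$ creates an eigenvalue $>1$ whose eigenvector lies in the cone. Nudging $\eta$ first does not help.

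The paper's proof fails at the same place. Lemma~\ref{Lemma: perturbing matrices with 1 in the spectrum} is false for this instance: property~3 forces $\widetilde A<1$, and then the fixed point $1/(1-\widetilde A)$ lies in $\overline P(B,\widetilde\eta)$, violating property~2.

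The structural reason is that, within affine instances, the eigenvalue $1$ of $\widehat A$ is pinned by the last row. So the linear-case trick of splitting even-multiplicity eigenvalues into complex pairs is unavailable for the Jordan chain of $\widehat A$ at $1$. A drift $b$ along that chain pointing into the cone then traps robustly. Item~1 and the ``if'' half of item~2 are correct and provable along the paper's lines, using the fixed-point witnesses above. A correct characterisation of robustly trapped affine instances, however, needs an extra clause for the generalised eigenspace of $A$ at $1$.
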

Observe that \eqref{eq: robust affine trapped} contains a slight abuse of notation: the matrix $(A - I)^{-1}$ is well defined if and only if $1 \notin \sigma(A)$. Thus, the second conjunct in the first disjunctive clause cannot be evaluated independently of the first conjunct.

Since our decision methods cannot halt on every input, it is natural to investigate the size of their halting sets.
We observe that our algorithms halt on almost every input:

\begin{theorem}\label{Theorem: measure of boundary instances}\hfill
    \begin{enumerate}
        \item The set of boundary instances of the Linear Universal Escape Problem has Lebesgue measure zero.
        \item The set of boundary instances of the Affine Universal Escape Problem has Lebesgue measure zero.
    \end{enumerate}
\end{theorem}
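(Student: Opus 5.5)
We use the characterisations in Theorem \ref{Theorem: characterisation of robust linear instances} and Theorem \ref{Theorem: characterisation of robust affine instances}. A boundary instance satisfies neither the robust escaping formula nor the robust trapped formula. So it suffices to show that the set of instances satisfying neither formula lies in a Lebesgue null set. The natural candidate is a proper real algebraic subset of the parameter space. Such sets are null, since the zero set of a nonzero polynomial has measure zero.

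For the linear problem we define a ``generic'' set $G \subseteq \R^{n\times n}\times\R^{m\times n}$ by three conditions. First, the discriminant of the characteristic polynomial of $A$ is nonzero, so all eigenvalues are simple. Second, $\det A \neq 0$, so $0 \notin \sigma(A)$. Third, for every real eigenvalue $\lambda$ with eigenvector $v$, all entries $B_j v$ are nonzero. The third condition is polynomial: $v$ can be taken as a nonzero column of $\operatorname{adj}(\lambda I - A)$. Hence it becomes the non-vanishing of the resultant (in $\lambda$) of $\chi_A(\lambda)$ with $\prod_j B_j\operatorname{adj}(\lambda I - A)_{\cdot,k}$, for a suitable column $k$. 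To avoid the choice of column, we can also use a sum of squares over columns, or argue locally on the open set where a given column is nonzero. Each defining polynomial is nonzero, as one checks on an explicit example such as a diagonal $A$ and a generic $B$. So the complement of $G$ is contained in a proper algebraic set and is null.

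Next we show that every instance in $G$ satisfies one of the two formulas. Take $(A,B)\in G$. The eigenvalues in $\sigma_{\ge 0}(A)$ are then positive and simple, hence in $\sigma^{\mathrm{odd}}_{>0}(A)$. Each has a one-dimensional eigenspace, so $v\in S^{n-1}$ ranges over $\pm v_0$, and every $B_jv_0$ is nonzero. Two cases arise. If for some such $\lambda$ all signs of $Bv_0$ agree, then $Bv\vgt0$ or $Bv\vlt0$ for both choices of $v$, and \eqref{eq: robust trappedness condition} holds. Otherwise, for every such $\lambda$ the vector $Bv_0$ has entries of both signs. Then for each $v=\pm v_0$ some $B_jv<0$, and \eqref{eq: robust escaping condition} holds, including vacuously when $\sigma_{\ge0}(A)=\emptyset$. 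Thus the boundary set is contained in the complement of $G$, which is null.

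For the affine problem we argue similarly with the formulas \eqref{eq: robust affine escape} and \eqref{eq: robust affine trapped}. The eigenvalues of $\widehat{A}$ are $\sigma(A)\cup\{1\}$. The genericity conditions are:
\begin{itemize}
\item[(i)] $A$ has simple eigenvalues and $1\notin\sigma(A)$, so all eigenvalues of $\widehat{A}$ are simple;
\item[(ii)] for real $\lambda\in\sigma(A)$ with $\lambda>1$ and eigenvector $v$, all entries of $Bv$ are nonzero;
\item[(iii)] the vector $B(A-I)^{-1}b+\eta$ has no zero entry.
\end{itemize}
Condition (iii) is polynomial after multiplying by $\det(A-I)$. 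For $\lambda>1$ the eigenvectors of $\widehat{A}$ are $(v,0)$, and $\widehat{B}_{m+1}(v,0)=0$, so only the rows of $B$ matter there.

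For $\lambda=1$ the eigenvector is $\pm\widehat{v}$ with $\widehat{v}\propto((I-A)^{-1}b,1)$. For $+\widehat{v}$, which has last coordinate positive, the condition ``some $\widehat{B}_j\widehat{v}<0$'' fails exactly when $B(A-I)^{-1}b\vlt-\eta$ holds in the sign-normalised sense. For $-\widehat{v}$ the last row already gives a negative entry. So, generically, either the first disjunct of \eqref{eq: robust affine trapped} holds, or the $\lambda=1$ part of \eqref{eq: robust affine escape} holds. The eigenvalues $\lambda>1$ are then handled exactly as in the linear case.

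The main obstacle is the sign bookkeeping at $\lambda=1$. We must verify carefully that the negation of the first trapped disjunct, combined with (iii), really forces some $\widehat{B}_j\widehat{v}<0$ for both signs of $\widehat{v}$. The other step needing care is checking that each genericity polynomial is not identically zero.
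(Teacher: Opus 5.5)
Your proof is correct. Its skeleton matches the paper's: discard instances where $A$ is singular or has repeated eigenvalues, then observe that a non-robust instance must have a real eigenvector $v$ for a positive eigenvalue with $B_jv=0$ for some $j$.

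The difference lies in how you organise the cases and how you certify that the exceptional set is null. The paper treats boundary-escaping and boundary-trapped instances separately. It places each inside the projection of a real algebraic set such as $\{(A,B,\lambda,v) : \norm{v}=1,\ Av=\lambda v,\ B_jv=0\}$ and appeals to the dimension bound for semi-algebraic projections. That route requires a dimension count of the variety, and the paper additionally assumes $B$ has full rank for it.

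You instead carve out an explicit generic set $G$, defined by non-vanishing of a discriminant, a determinant and a resultant. On $G$, the formulas \eqref{eq: robust escaping condition} and \eqref{eq: robust trappedness condition} are exhaustive by a one-line sign dichotomy on $Bv_0$. So you need only that the zero set of a nonzero polynomial is null. The price is that you must exhibit a witness showing each polynomial is not identically zero.

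You also carry out the affine case, which the paper only calls analogous, and your condition (iii) is exactly the extra ingredient it needs. For $\widehat v=c\,((I-A)^{-1}b,1)$ one has $\widehat B_j\widehat v=-c\,(B(A-I)^{-1}b+\eta)_j$ for $j\le m$ and $\widehat B_{m+1}\widehat v=c$. Hence, under (iii), the $\lambda=1$ clause of \eqref{eq: robust affine escape} fails exactly when the first disjunct of \eqref{eq: robust affine trapped} holds. This settles the step you flagged as the main obstacle.

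One small slip concerns the witness. A diagonal $A$ does not certify non-vanishing of the fixed-column resultant: column $k$ of $\operatorname{adj}(\lambda I-A)$ then vanishes at every eigenvalue other than $d_k$, so that resultant is zero there. The diagonal witness (with all $B_{jl}\neq 0$) does work for your sum-of-squares variant. For a fixed column, take instead $A=QDQ^{-1}$ whose left eigenvectors all have nonzero $k$-th entry, together with $B$ satisfying $B_jv\neq 0$ on all eigenvectors.
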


The rest of the paper is structured as follows:
In Section \ref{Section: Preliminaries} we summarise required results and definitions from computable analysis and linear systems.
In Section \ref{Section: Computability} we show that the formulas in Theorems \ref{Theorem: characterisation of robust linear instances} and \ref{Theorem: 
characterisation of robust affine instances} are semidecidable. This proves Theorems \ref{Theorem: main theorem} and \ref{Theorem: main theorem 2} relative to Theorems \ref{Theorem: characterisation of robust linear instances} and \ref{Theorem: characterisation of robust affine instances}.
The rest of the paper is dedicated to the proofs of Theorems \ref{Theorem: characterisation of robust linear instances}, \ref{Theorem: characterisation of robust affine instances}, and \ref{Theorem: measure of boundary instances}.
In Section \ref{Section: perturbation lemmas} we prove a few technical lemmas that are required for the proofs of Theorems \ref{Theorem: characterisation of robust linear instances} and \ref{Theorem: characterisation of robust affine instances}.
The first part of Theorem \ref{Theorem: characterisation of robust linear instances} is proved in Section \ref{Section: Robust Escape Linear}.
The second part is proved in Section \ref{Section: Robust Trapped Linear}.
Theorem \ref{Theorem: characterisation of robust affine instances} is proved in Sections \ref{Section: Robust Escape Affine} and \ref{Section: Robust Trapped Affine}.
Theorem \ref{Theorem: measure of boundary instances} is proved in Section \ref{Section: Measure Boundary}.
\section{Preliminaries}
\label{Section: Preliminaries}

\subsection{Represented Spaces}

We assume some familiarity with the basic results and definitions of computable analysis and the theory of represented spaces.
For details we refer to \cite{PaulyRepresented} or \cite{CollinsMonograph}.
A \emph{represented space} is a set $X$ together with a partial surjective map $\delta_X \colon \subseteq \NN \to X$.
Examples of represented spaces include computable metric spaces with the Cauchy representation \cite{BrattkaPresser,Ko,WeihrauchBook}.
A map $f \colon X \to Y$ is called \emph{computable} if there exists a computable function 
$F \colon \dom \delta_X \to \dom \delta_Y$ satisfying 
$\delta_Y \circ F = f \circ \delta_X$.
It is called \emph{continuous} if there exists a continuous such function.
Clearly, computability implies continuity.
Continuity in this sense is closely linked to topological continuity with respect to the final topology induced by the representation.
See \cite{PaulyRepresented,SchroederAdmissibility,SchroederPhD} for details.

Represented spaces enjoy excellent closure properties. In particular, they admit countable products and exponentials.
This is the basis for important hyperspace constructions.
Let $\Sigma = \{\top, \bot\}$ denote Sierpinski space, with the representation $\delta_{\Sigma} \colon \NN \to \Sigma$
where $\delta_{\Sigma}(0^\omega) = \bot$ and $\delta_{\Sigma}(x) = \top$ for all $x \in \NN \setminus \{0^\omega\}$.
Then, for a represented space $X$, the space $\O(X)$ of open subsets of $X$ is given by identification of an open set with its characteristic function as an element of the space $\Sigma^X$ of all continuous functions from $X$ to $\Sigma$. The space $\A(X)$ of closed subsets of $X$ is given by identifying a closed set $A$ with its complement as an element of $\O(X)$.

A subset $K \subseteq X$ is \emph{compact} if and only if the set $\square K = \Set{U \in \O(X)}{K \subseteq U}$ is an open subset of $\O(X)$.
The set $K$ is called \emph{computably compact} if $\square K$ is a computable open subset of $\O(X)$.
Dually a subset $A \subseteq X$ is \emph{overt} if and only if the set $\lozenge A = \Set{U \in \O(X)}{A \cap U \neq \emptyset}$ is an open subset of $\O(X)$.
The set $A$ is called \emph{computably overt} if $\lozenge A$ is a computable open subset of $\O(X)$.
It is easy to see that \emph{every} set is overt, so that only computable overtness yields a (classically) meaningful notion.

For a represented space $X$ the \emph{space $\K(X)$ of compact subsets of $X$} is the space whose underlying set is the set of saturated compact subsets of $X$ and whose representation is obtained by identifying a saturated compact set $K$ with the set $\square K \in \O(\O(X))$.
The restriction to \emph{saturated} compact sets is necessary, since the set $\square K$ determines $K$ only up to saturation.
Dually,  the \emph{space $\V(X)$ of overt subsets of $X$} is the space whose underlying set is the set of closed subsets of $X$ and whose representation is obtained by identifying a closed set $A$ with the set $\lozenge A \in \O(\O(X))$.
The restriction to \emph{closed} sets is necessary, since the set $\lozenge A$ determines $A$ only up to closure.
See \cite{PaulyRepresented,CollinsMonograph} for effective closure properties of compact and overt sets.

The following important proposition follows immediately from the definition:

\begin{proposition}\label{Proposition: overtness and compactness}
    Let $X$ be a represented space.
    Given compact sets $K_1,\dots,K_n \in \K(X)$,
    overt sets
    $A_1,\dots,A_n \in \V(X)$,
    and an open set $U \in \O(X^{2n})$,
    we can semidecide if the sentence
    \[
        \forall x_1 \in K_1. \exists y_1 \in V_1. \dots \forall x_n \in K_n. \exists y_n \in V_n.
                (x_1,y_1,\dots,x_n,y_n) \in U
    \]
    holds true.
\end{proposition}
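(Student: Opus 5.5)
The proof is a direct unfolding of the definitions of $\K(X)$, $\V(X)$ and $\O(X)$ as function spaces into Sierpinski space, by induction on the number of quantifier pairs, working from the innermost quantifier outward. (The statement writes $V_i$ for the overt sets; these are the sets $A_i \in \V(X)$.) The key facts are the evaluation maps. The map $\K(X)\times\O(X)\to\Sigma$, $(K,U)\mapsto [K\subseteq U]$, is computable, since $K$ is literally given as $\square K\in\O(\O(X))$, and evaluating a point of $\O(\O(X))=\Sigma^{\O(X)}$ at an element of $\O(X)$ is computable. Likewise, $(A,U)\mapsto[A\cap U\neq\emptyset]$ is computable on $\V(X)\times\O(X)$. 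I also use that represented spaces are cartesian closed, so currying and uncurrying of $\Sigma$-valued functions are computable.

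\begin{enumerate}
\item For fixed $x_1,y_1,\dots,x_n$, the set $U_{x_1,\dots,x_n}=\Set{y_n}{(x_1,\dots,x_n,y_n)\in U}$ is obtained by currying $U\in\Sigma^{X^{2n}}$. This gives a computable map $X^{2n-1}\to\O(X)$. Applying the overt evaluation with $A_n$, I obtain a computable map $X^{2n-1}\to\Sigma$, i.e.\ an open set $U'\in\O(X^{2n-1})$ describing $\exists y_n\in A_n.\,(\dots)\in U$.
\item Currying again in the last variable gives $x_n\mapsto \Set{x_n}{(\dots,x_n)\in U'}$, a computable map $X^{2n-2}\to\O(X)$. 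Applying the compact evaluation with $K_n$ yields $U''\in\O(X^{2n-2})$ describing $\forall x_n\in K_n.\,\exists y_n\in A_n.\,(\dots)\in U$.
\item I iterate these two steps $n$ times. This ends with an element of $\O(X^0)=\Sigma$, computed uniformly from the data $K_i$, $A_i$, $U$. Its value is $\top$ exactly when the sentence holds, and a $\Sigma$-value is by definition semidecidable: the algorithm halts when a nonzero symbol appears in the name.
\end{enumerate}

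Each step preserves correctness by construction, because $\square K$ and $\lozenge A$ encode exactly the truth values of ``$K\subseteq V$'' and ``$A\cap V\neq\emptyset$''.

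The only point needing care is the computability of currying $\Sigma^{X\times Y}\to(\Sigma^Y)^X$ for arbitrary represented spaces. I would cite the standard cartesian closedness of represented spaces from the references rather than reprove it.
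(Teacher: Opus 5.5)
Your proof is correct and is exactly the unfolding of definitions the paper has in mind when it says the proposition ``follows immediately from the definition''; the paper gives no further argument. You curry the open set, evaluate $\lozenge A_i$ and $\square K_i$ from the innermost quantifier outward using cartesian closedness, and end with an element of $\Sigma$; you also correctly read the $V_i$ in the statement as the overt sets $A_i$.
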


\subsection{Linear Dynamical Systems}

We require two results on the long-term behaviour of linear dynamical systems.
We will decompose the systems under consideration into systems with only positive real eigenvalues and systems without any positive real eigenvalues.
For our purpose, the following theorem characterises the behaviour of the latter:

\begin{theorem}[Bell and Gerhold, \cite{BellGerhold07}]\label{Theorem: LRS without positive real eigenvalues}
    Let $f \colon \N \to \R$ be a linear recurrence sequence without positive real eigenvalues.
    Then $f$ is either eventually zero or $f$ assumes infinitely many (strictly) negative values.
\end{theorem}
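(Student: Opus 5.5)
We argue by contradiction. Suppose $f$ is not eventually zero and $f(k) \geq 0$ for all sufficiently large $k$. Then we show that $f$ must have a positive real eigenvalue.

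\textbf{Step 1: normal form.} Write $f$ in its exponential-polynomial form
\[
  f(k) = \sum_{i=1}^{s} p_i(k)\lambda_i^k \qquad (k \geq k_0),
\]
with distinct nonzero characteristic roots $\lambda_i \in \C$ and nonzero polynomials $p_i$. Eigenvalue $0$ affects only finitely many initial terms, so it is absorbed by taking $k_0$ large. Since $f$ is real, the roots come in conjugate pairs with conjugate polynomials. By hypothesis no $\lambda_i$ lies in $(0,\infty)$, so $\lambda_i = r_i e^{\mathrm{i}\theta_i}$ with $\theta_i \not\equiv 0 \pmod{2\pi}$. Because $f$ is not eventually zero, $s \geq 1$.

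\textbf{Step 2: isolate the dominant part.} Let $r = \max_i |\lambda_i|$ and let $d$ be the largest degree of a $p_i$ with $|\lambda_i| = r$. Let $I$ be the set of indices with $|\lambda_i| = r$ and $\deg p_i = d$, and let $c_i \neq 0$ be the leading coefficient of $p_i$ for $i \in I$. Normalising gives
\[
  g(k) := \frac{f(k)}{k^d r^k} = h(k) + \varepsilon(k),
  \qquad h(k) = \sum_{i \in I} c_i e^{\mathrm{i} k\theta_i}, \qquad \varepsilon(k) \to 0 .
\]
The sum $h$ is real-valued, since $I$ is closed under conjugation, and it is bounded, say $|h| \leq M$. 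The frequencies $\theta_i$, $i \in I$, are pairwise distinct modulo $2\pi$ and all nonzero modulo $2\pi$.

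\textbf{Step 3: Cesàro averages.} Compute the averages over $k = 1,\dots,N$ as $N \to \infty$.
\begin{itemize}
  \item[(a)] The mean of $h(k)$ tends to $0$. Each term $\frac{1}{N}\sum_k e^{\mathrm{i} k\theta}$ with $\theta \not\equiv 0$ is a geometric sum bounded by $2/(N|1-e^{\mathrm{i}\theta}|)$.
  \item[(b)] The mean of $|h(k)|^2$ tends to $\sum_{i\in I}|c_i|^2 > 0$. The cross terms have frequencies $\theta_i - \theta_j \not\equiv 0$ for $i \neq j$, so they average to $0$ by the same geometric-sum bound.
\end{itemize}

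\textbf{Step 4: contradiction.} From $g(k) \geq 0$ eventually we get $h(k) \geq -\varepsilon(k)$ eventually. Hence the negative part $h^-(k) \leq |\varepsilon(k)| \to 0$, and so its mean tends to $0$. Using $|h| = h + 2h^-$, the mean of $|h|$ tends to $0$ by (a). But $|h|^2 \leq M|h|$, so the mean of $|h|^2$ also tends to $0$, contradicting (b). Therefore $f$ takes negative values for infinitely many $k$.

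The main obstacle is Step 2: one must choose the normalisation so that the remainder $\varepsilon(k)$ really tends to $0$. This holds because every other term either has strictly smaller modulus, giving exponential decay after dividing by $r^k$, or has lower polynomial degree, giving decay at least like $1/k$. The rest of the argument is routine.
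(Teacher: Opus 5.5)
Your proof is correct and complete. It necessarily differs from the paper, which gives no argument for this statement: the result is quoted from Bell and Gerhold, who prove a stronger positive-density version.

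Results of this type are usually obtained via Kronecker--Weyl equidistribution of $(k\theta_i)_{i\in I}$ in a closed subgroup of the torus, on which the dominant real trigonometric sum integrates to zero. You replace that machinery with two geometric-sum estimates: zero mean of $h$, because no frequency is $0 \bmod 2\pi$, and positive mean square, because the frequencies are distinct. These are combined with the elementary inequalities $|h| = h + 2h^-$ and $|h|^2 \le M|h|$.

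This route is elementary and self-contained. It also treats degenerate sequences uniformly (negative real roots, roots differing by roots of unity, rationally related angles), so there is no need to split $f$ along residue classes, a step that could create positive real roots here.

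Nothing is lost quantitatively either. With $S=\sum_{i\in I}|c_i|^2$, your two moment computations give $\liminf_N \frac1N\sum_{k\le N}h^-(k)\ge S/(2M)$. Hence $h(k)<-S/(4M)$ on a set of lower density at least $S/(4M^2)$, so $f(k)<0$ on a set of positive lower density. Applying the same argument to $-f$ gives the same conclusion for positive terms.

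Your argument also uses only that no root indexed by $I$ is a positive real, so it proves the statement under weaker hypotheses than the stated one. The citation buys the paper brevity; your route would make it independent of the external reference.
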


\begin{corollary}\label{Corollary: matrix iterations without positive real eigenvalues}
    Let $E$ be a finite-dimensional real vector space.
    Let $\mathcal{L} \colon E \to E$ be a linear map without positive real eigenvalues.
    Let $\varphi \colon E \to \R$ be a linear functional.
    Then for all $x \in E$ there exist infinitely many $k \geq 0$ with 
    $\varphi \circ \mathcal{L}^k x \leq 0$.
\end{corollary}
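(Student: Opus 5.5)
We derive the corollary from Theorem \ref{Theorem: LRS without positive real eigenvalues}. Fix $x \in E$ and consider the sequence $f(k) = \varphi(\mathcal{L}^k x)$.

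The first step is to verify that $f$ is a linear recurrence sequence with no positive real eigenvalues. Let $\chi(t) = t^d + c_{d-1}t^{d-1} + \dots + c_0$ be the characteristic polynomial of $\mathcal{L}$, with $d = \dim E$. By Cayley--Hamilton, $\chi(\mathcal{L}) = 0$. Applying $\varphi \circ (\cdot) \circ \mathcal{L}^k$ to $\chi(\mathcal{L})x = 0$ gives
\[
f(k+d) + c_{d-1} f(k+d-1) + \dots + c_0 f(k) = 0 \quad \text{for all } k \geq 0.
\]
So $f$ satisfies a linear recurrence with characteristic polynomial $\chi$. The roots of $\chi$ are exactly the eigenvalues of $\mathcal{L}$ (its complex eigenvalues, after complexification), and by hypothesis none of them is a positive real number. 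The minimal recurrence of $f$ has a characteristic polynomial dividing $\chi$, so its roots are also not positive reals. Whichever convention for ``eigenvalues of a linear recurrence sequence'' Bell and Gerhold use, minimal polynomial or any annihilating polynomial, the hypothesis of Theorem \ref{Theorem: LRS without positive real eigenvalues} holds.

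The second step applies Theorem \ref{Theorem: LRS without positive real eigenvalues}. Either $f$ is eventually zero, or $f$ takes strictly negative values for infinitely many $k$.
\begin{itemize}
\item If $f$ is eventually zero, then $f(k) = 0 \leq 0$ for all sufficiently large $k$, which gives infinitely many such $k$.
\item Otherwise, there are infinitely many $k$ with $f(k) < 0$, so in particular $f(k) \leq 0$.
\end{itemize}
In both cases there are infinitely many $k \geq 0$ with $\varphi \circ \mathcal{L}^k x \leq 0$, as claimed.

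The degenerate case $E = \{0\}$ is trivial, since then $f \equiv 0$. The only point needing care is the identification of the eigenvalues of $f$ with a subset of $\sigma(\mathcal{L})$, and the divisibility argument above handles this.
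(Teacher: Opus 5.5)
Your proposal is correct and follows the paper's proof: Cayley--Hamilton shows that $k \mapsto \varphi(\mathcal{L}^k x)$ satisfies a linear recurrence with the characteristic polynomial of $\mathcal{L}$, and the Bell--Gerhold dichotomy then gives infinitely many non-positive terms in both cases. Your extra remark, that the minimal recurrence polynomial divides $\chi$ so the hypothesis holds under either convention for the eigenvalues of a recurrence, makes explicit a point the paper leaves implicit.
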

\begin{proof}
    It follows from the Cayley-Hamilton theorem that the sequence of real numbers $\varphi \circ \mathcal{L}^k x$ satisfies a linear recurrence whose characteristic polynomial is that of the matrix $\mathcal{L}$.
    The result now immediately follows from Theorem \ref{Theorem: LRS without positive real eigenvalues}.
\end{proof}

The long-term behaviour of linear dynamical systems with only positive real eigenvalues is captured by the next lemma:

\begin{lemma}\label{Lemma: linear map with positive real eigenvalues}
    Let $E$ be a finite-dimensional real vector space.
    Let $\mathcal{L} \colon E \to E$ be a linear map with $\sigma(\mathcal{L}) \subseteq (0, +\infty)$.
    Let $\rho_1 < \dots < \rho_N$ denote the eigenvalues of $\mathcal{L}$.
    Let $E_j$ denote the space of all generalised eigenvectors for the eigenvalue $\rho_j$.
    Let $x \in E \setminus \{0\}$.
    Write $x = x_1 + \dots + x_N$ with $x_j \in E_j$.
    Let $1 \leq M \leq N$ be maximal with the property that $x_M \neq 0$.
    Then:
    \begin{enumerate}
        \item There exists a non-negative integer $d$ such that the sequence $\binom{k}{d}^{-1}\rho_M^{-k}\mathcal{L}^{k} x$ converges to an eigenvector for the eigenvalue $\rho_M$ as $k \to \infty$. 
        \item For all $\rho > \rho_M$, the sequence $\rho^{-k} \mathcal{L}^{k}x$ converges to zero as $k \to \infty$.
    \end{enumerate}
\end{lemma}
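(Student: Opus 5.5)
The plan is to use the generalised eigenspace decomposition together with the binomial expansion of powers of a Jordan block. Since $E = E_1 \oplus \dots \oplus E_N$ and each $E_j$ is $\mathcal{L}$-invariant, we have $\mathcal{L}^k x = \sum_{j=1}^{M} \mathcal{L}^k x_j$, because $x_j = 0$ for $j > M$. On each $E_j$ we write $\mathcal{L}|_{E_j} = \rho_j I + N_j$, where $N_j$ is nilpotent and commutes with $\rho_j I$. The binomial theorem then gives
\[
\mathcal{L}^k x_j = \sum_{i=0}^{n_j} \binom{k}{i} \rho_j^{k-i} N_j^i x_j,
\]
valid for $k \geq n_j$. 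Here $n_j$ is the largest $i$ with $N_j^i x_j \neq 0$; it exists and is finite because $N_j$ is nilpotent, and $n_j$ is well defined whenever $x_j \neq 0$.

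For the first claim, set $d = n_M$. Dividing by $\binom{k}{d}\rho_M^k$, the $i = d$ term of the $j = M$ summand is constant and equal to $\rho_M^{-d} N_M^d x_M$. This vector is nonzero and is an eigenvector for $\rho_M$, since $N_M(N_M^d x_M) = 0$ means $\mathcal{L} N_M^d x_M = \rho_M N_M^d x_M$. The remaining terms with $j = M$ and $i < d$ carry the factor $\binom{k}{i}/\binom{k}{d} = O(k^{i-d}) \to 0$. The terms with $j < M$ carry the factor $\binom{k}{i}\binom{k}{d}^{-1}(\rho_j/\rho_M)^k$, a polynomial in $k$ times a geometric factor with ratio $\rho_j/\rho_M < 1$; these tend to $0$ because positivity gives $0 < \rho_j < \rho_M$. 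So the limit is $\rho_M^{-d}N_M^d x_M$.

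For the second claim, fix $\rho > \rho_M$. Each term of $\rho^{-k}\mathcal{L}^k x$ has the form $\binom{k}{i}\rho_j^{-i}(\rho_j/\rho)^k N_j^i x_j$ with $\rho_j \leq \rho_M < \rho$. Every such term is a polynomial in $k$ times a geometric factor with ratio in $(0,1)$, so it tends to $0$, and hence so does the finite sum.

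There is no real obstacle. The one step needing care is the standard fact that $\binom{k}{i} q^k \to 0$ for $0 < q < 1$. One also has to use positivity of the $\rho_j$ so that the ratios $\rho_j/\rho_M$ lie strictly in $(0,1)$ and no negative or oscillating contributions arise. Finite-dimensionality makes convergence componentwise in any basis.
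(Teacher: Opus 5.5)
Your proof is correct and takes essentially the same route as the paper. The paper expands $\mathcal{L}^k x$ binomially in an explicit Jordan basis and takes $d = Q-1$, with $Q$ the highest chain position carrying a nonzero coordinate in $E_M$; this is exactly your $n_M$, and the paper's limit $\sum_p \rho_M^{1-Q} c_{M,p,Q} v_{M,p,1}$ equals your $\rho_M^{-d} N_M^{d} x_M$, so your decomposition $\mathcal{L}|_{E_j} = \rho_j I + N_j$ is a coordinate-free version of the same computation.
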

\begin{proof}
    Consider a Jordan basis of $E$:
    There exists a basis of $E$ consisting of vectors $v_{\ell, p, q} \in E$
    where $\ell \in \{1,\dots,N\}$,
    $p \in \{1,\dots,s_\ell\}$,
    $q \in \{1,\dots,t_{\ell,p}\}$
    such that 
    $\mathcal{L} v_{\ell, p, 1} = \rho_\ell v_{\ell, p, 1}$
    and 
    $\mathcal{L} v_{\ell, p, q + 1} = \rho_\ell v_{\ell, p, q + 1} + v_{\ell, p, q}$
    for all $\ell,p,q$.

    There hence exist unique real numbers $c_{\ell,p,q}$ with
    \[
        x = \sum_{\ell = 1}^M \sum_{p = 1}^{s_\ell} \sum_{q = 1}^{t_{\ell, p}} c_{\ell,p,q} v_{\ell, p,q}.
    \]
    The first sum ranges only from $1$ to $M$, since by assumption on $x$ we have $x_{\ell} = 0$ for all $\ell > M$.
    We hence have for all $k \geq n$:
    \[
        \mathcal{L}^k x = \sum_{\ell = 1}^M \sum_{p = 1}^{s_\ell} \sum_{q = 1}^{t_{\ell, p}} \sum_{d = 0}^{t_{\ell,p}-q} \binom{k}{d} \rho_{\ell}^{k-d} c_{\ell,p,q+d} v_{\ell, p,q}.
    \]
    The second claim is now easy to see.
    For the first claim, let $Q$ be maximal with the property that $c_{M,p,Q} \neq 0$ for some $p$.
    Such a $Q$ exists since $x_M \neq 0$ by assumption.
    Now, observe that the sequence $\binom{k}{Q-1}^{-1} \rho^{-k}_M \mathcal{L}^k x$ converges to a limit as $k \to \infty$, namely 
    \[
        \lim_{k \to \infty} \binom{k}{Q-1}^{-1} \rho^{-k}_M \mathcal{L}^k x 
        =
        \sum_{p = 1}^{s_M} 
            \rho_{M}^{1-Q} c_{M,p, Q} v_{M, p, 1}.
    \]
    This is an eigenvector for the eigenvalue $\rho_M$.
\end{proof}

\section{Recognising Robust Instances}
\label{Section: Computability}

\begin{proposition}\label{Proposition: computability}
    The formulas \eqref{eq: robust escaping condition}, \eqref{eq: robust trappedness condition}, \eqref{eq: robust affine escape}, and \eqref{eq: robust affine trapped} in Theorems \ref{Theorem: characterisation of robust linear instances} and \ref{Theorem: characterisation of robust affine instances} are semidecidable.
\end{proposition}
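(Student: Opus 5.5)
The plan is to rewrite each formula so that it has the quantifier shape of Proposition \ref{Proposition: overtness and compactness}: universal quantifiers range over computably compact sets, existential quantifiers over computably overt sets, and the matrix is an open predicate. Four ingredients are needed.

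\textbf{Spectra.} The roots of the characteristic polynomial of $A$ form a multiset that is computable from the coefficients. Hence $\sigma(A)$ is computable both as a compact and as an overt subset of $\C$. Intersecting with the computably closed set $[0,+\infty)$ preserves compactness, so $\sigma_{\geq 0}(A)$ and $\sigma_{\geq 1}(\widehat{A})$ are computably compact. We are not claiming they are overt; the universal quantifiers only need compactness. For the existential quantifier over $\sigma^{\text{odd}}_{>r}(A)$ we avoid overtness of that set altogether, as described below.

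\textbf{Compact quantifiers.} A formula $\forall \lambda \in K.\forall v\in S^{n-1}.(Av=\lambda v \Rightarrow \psi)$ is equivalent to $\forall (\lambda,v)\in K\times S^{n-1}.\ (Av\neq\lambda v \lor \psi)$. Here $Av\neq\lambda v$ is an open, semidecidable predicate, and $K\times S^{n-1}$ is computably compact. The predicates $\psi$ are:
\begin{itemize}
\item $\exists j. B_jv<0$,
\item $Bv\vgt 0\lor Bv\vlt 0$.
\end{itemize}
Both are finite disjunctions and conjunctions of strict inequalities, hence open. Proposition \ref{Proposition: overtness and compactness} therefore gives semidecidability of \eqref{eq: robust escaping condition} and \eqref{eq: robust affine escape} directly. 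It also gives semidecidability of the inner universal part of \eqref{eq: robust trappedness condition} uniformly in a given $\lambda$.

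\textbf{The existential over odd-multiplicity eigenvalues.} This is the main obstacle, because $\lambda$ ranges over a set whose membership condition (exact multiplicity) is not continuous. The approach is to search over rational intervals instead of exact eigenvalues. We search for:
\begin{itemize}
\item a rational interval $[a,b]\subseteq(0,\infty)$ (respectively $(1,\infty)$);
\item a verification that the characteristic polynomial $\chi_A$ satisfies $\chi_A(a)\chi_A(b)<0$, which is an open condition;
\item a verification of the compact-quantified formula $\forall \lambda\in\sigma(A)\cap[a,b].\forall v\in S^{n-1}.(Av\neq \lambda v\lor Bv\vgt0\lor Bv\vlt0)$.
\end{itemize}

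Soundness: a sign change forces some real root in $[a,b]$ of odd multiplicity, and the universal condition holds for that root in particular.

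Completeness: suppose $\lambda_0\in\sigma^{\text{odd}}_{>0}(A)$ satisfies the condition. Take $[a,b]$ a small rational interval isolating $\lambda_0$ from the other roots. The sign of $\chi_A$ changes across $\lambda_0$ because its multiplicity is odd, and $\sigma(A)\cap[a,b]=\{\lambda_0\}$. So the search eventually succeeds. Since $\sigma(A)\cap[a,b]$ is computably compact, the whole search is a countable disjunction of semidecidable conditions, hence semidecidable.

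\textbf{Remaining affine clause.} In \eqref{eq: robust affine trapped} the first disjunct is handled as follows. The condition $1\notin\sigma(A)$ is $\det(A-I)\neq 0$, which is semidecidable. Once it is confirmed, $(A-I)^{-1}b$ becomes computable, and $B(A-I)^{-1}b\vlt-\eta$ is a finite conjunction of strict inequalities, hence semidecidable. Dovetailing this test with the test for the second disjunct gives semidecidability of the disjunction, which completes the proof.
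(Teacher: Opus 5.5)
Your proof is correct. For the compact-quantified formulas \eqref{eq: robust escaping condition}, \eqref{eq: robust affine escape} and for the first affine disjunct it matches the paper. You test $\det(A-I)\neq 0$ where the paper semidecides $1\notin\sigma(A)$ from the spectrum as a closed set, which makes no difference.

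The genuine difference is in the existential over $\sigma^{\text{odd}}_{>0}(A)$. The paper turns the sign-change criterion into a proof that $\sigma^{\text{odd}}_{>0}(A)$ is computably overt: an open rational interval meets it iff it contains rationals $a'<b'$ with $\chi_A(a')\chi_A(b')<0$. It then applies Proposition~\ref{Proposition: overtness and compactness} directly to the alternation $\exists\lambda\in\sigma^{\text{odd}}_{>0}(A).\forall v\in S^{n-1}$.

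You use the same sign-change test differently. It serves only as a certificate that some odd-multiplicity root lies in $[a,b]$. You then replace evaluating the inner predicate at that unknown root by a universal quantifier over the compact set $\sigma(A)\cap[a,b]$. This forces you to prove completeness by isolating $\lambda_0$, because other real eigenvalues in $[a,b]$ (for instance ones of even multiplicity) could violate the condition. You handle this correctly.

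The paper's route is more modular: the overt set is produced once and the general quantifier principle does the rest. Yours needs only computable compactness, intersection with a closed interval, and dovetailing over countably many candidates. It avoids the overt-set machinery at the cost of a slightly more ad hoc search.
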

\begin{proof}
    Polynomial root finding is computable \cite{SpeckerZeros}: given a vector of complex numbers $(a_1,\dots,a_{d}) \in \C^d$ we can compute 
    complex numbers $(z_1,\dots,z_{d}) \in \C^d$ such that 
    $z_1,\dots,z_{d}$ contains all roots of the polynomial
    $X^d + a_{1} X^{d-1} + \dots + a_d$.
    It follows that given a real matrix $A \in \R^{n \times n}$ we can compute the spectrum $\sigma(A)$ as an overt and compact subset of $\C$.
    To semidecide the formula \eqref{eq: robust escaping condition}, observe that the set $[0, +\infty)$ is a computable closed subset of $\C$.
    It follows that $\sigma_{\geq 0}(A) = \sigma(A) \cap [0,+\infty)$ is a computably compact subset of $\C$,
    uniformly in $A$.
    The set $S^{n-1}$ is a computably compact subset of $\R^n$, uniformly in $n$.
    The set 
    \[
        \Set{(\lambda, v) \in \C \times \R^{n}}{Av \neq \lambda v \lor \exists j \in \{1,\dots,m\}:\;B_j v < 0}
    \]
    is easily seen to be a computably open subset of $\C \times \R^n$.
    Semidecidability of \eqref{eq: robust escaping condition} follows.
    Semidecidability of \eqref{eq: robust affine escape} is proved analogously.

    To show that \eqref{eq: robust trappedness condition} is semidecidable,
    we first observe that $\sigma^{\text{odd}}_{> 0}(A)$ is computable as an overt subset of $\R$, uniformly in $A$.
    Indeed, it follows from the fundamental theorem of algebra that a rational interval $(a,b)$ contains a real root of a polynomial $f$ 
    of odd multiplicity if and only if there exist rational numbers $ a < a' < b' < b$ with $f(a') \cdot f(b') < 0$.
    This is clearly semidecidable, yielding the claim of overtness.
    Now, the result follows easily from Proposition \ref{Proposition: overtness and compactness}.
    
    To show that \eqref{eq: robust affine trapped} is semidecidable, we treat both disjunctive clauses separately.
    The second disjunct is semidecided analogously to \eqref{eq: robust trappedness condition}.
    For the first disjunct, we use that we can compute $\sigma(A)$ as a closed subset of $\C$, so that we can semidecide if 
    $1 \notin \sigma(A)$.
    If the we find this to be true, $(A - I)^{-1}$ is well-defined and uniformly computable in $A$ via Gaussian elimination (cf.~\cite{BrattkaZieglerLA}).
    It is then obvious that we can semidecide the condition $B \cdot (A - I)^{-1} b \vlt -\eta$.
\end{proof}
\section{Perturbation Lemmas}
\label{Section: perturbation lemmas}

Our proofs of Theorems \ref{Theorem: characterisation of robust linear instances} and \ref{Theorem: characterisation of robust affine instances} require the existence of certain matrix perturbations with prescribed properties.
We summarise them in this section.
Let $X$ be a metric space and let $x \in X$.
We say that \emph{there exist arbitrarily small perturbations of $x$ with the property $P$}
if for all $\varepsilon > 0$ there exists $\widetilde{x} \in X$ with $d(x, \widetilde{x}) < \varepsilon$
such that $\widetilde{x}$ has the property $P$.

\begin{lemma}\label{Lemma: trapped point perturbation lemma}
    Let $A \in \R^{n \times n}$.
    Let $r \geq 0$.
    Let 
    $\sigma^{\mathrm{odd}}_{> r}({A}) = \left\{\lambda_1,\dots,\lambda_N\right\}$
    where $N \geq 0$.
    For $j = 1, \dots, N$, let $v_j$ be an eigenvector of $A$ for the eigenvalue $\lambda_j$.
    Then there exist arbitrarily small perturbations $\widetilde{A}$ of $A$
    such that
    $\sigma_{>r}\left(\widetilde{A}\right) = \{\lambda_1,\dots,\lambda_N\}$,
    all eigenvalues of $\widetilde{A}$ in $(r, +\infty)$ are simple,
    and $\widetilde{A}$'s eigenspace for $\lambda_j$ is spanned by $v_j$. 
\end{lemma}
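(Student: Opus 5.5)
Work in a real Jordan basis of $A$ and perturb each real Jordan block (or block of complex conjugate blocks) separately, keeping track of the spectrum in $(r,+\infty)$. Since $A$ is real, we may write $A = T J T^{-1}$ with $T$ real invertible and $J$ in real Jordan form. A perturbation $\widetilde J$ of $J$ of size $\delta$ gives a perturbation $T\widetilde J T^{-1}$ of $A$ of size at most $\norm{T}\norm{T^{-1}}\delta$. So it suffices to perturb $J$ block by block, and then translate the eigenvector condition back through $T$.

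\textbf{Treatment of the blocks.}
\begin{enumerate}
\item Blocks for non-real eigenvalues and for real eigenvalues $\le r$ are left untouched, except for one adjustment. If $r>0$ and some eigenvalue equals $r$, there is a risk. We keep such a block as it is, or shift it slightly downward by subtracting $\varepsilon I$ on that block, so that it contributes nothing to $(r,+\infty)$.
\item For a real eigenvalue $\mu>r$, collect all its Jordan blocks. Their total size is the algebraic multiplicity $a$.
\begin{itemize}
\item If $a$ is even, replace the collection by a perturbation without real eigenvalues. Pair up the $a$ dimensions into $a/2$ two-dimensional pieces. First kill the superdiagonal ones by an $\varepsilon$-perturbation; this is legitimate since we only need closeness, not similarity. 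Then replace each diagonal $2\times 2$ block $\mu I_2$ by $\begin{pmatrix}\mu & -\varepsilon\\ \varepsilon & \mu\end{pmatrix}$, whose eigenvalues $\mu\pm i\varepsilon$ are non-real.
\item If $a$ is odd, so $\mu=\lambda_j$, do the same with all dimensions but one. The remaining one is the direction of the eigenvector $v_j$; it keeps eigenvalue $\lambda_j$ and becomes simple.
\end{itemize}
\end{enumerate}

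\textbf{Main obstacle: the eigenvector condition.} The given $v_j$ is an arbitrary eigenvector, which need not be a basis vector of the chosen Jordan basis. The plan is to choose the Jordan basis for $\lambda_j$ adapted to $v_j$. Write the generalised eigenspace of $\lambda_j$ as a direct sum of cyclic subspaces, each corresponding to a Jordan chain. Decompose $v_j$ in the eigenspace, which is spanned by the chain heads. It is a standard fact that one may choose the chain decomposition so that $v_j$ is a multiple of the head of one of the chains. Concretely, take a head whose component in $v_j$ is nonzero among chains of maximal length containing a nonzero component, and replace that chain by a suitably rescaled combination. Rather than rely on that, we can argue more directly.

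\textbf{Direct argument.} After killing all superdiagonal $1$'s in the $\lambda_j$ part, i.e.\ replacing the nilpotent part $N_j$ by $0$, the generalised eigenspace $E_j$ becomes $\lambda_j I$ on $E_j$, and $v_j\in E_j$. Now choose any basis of $E_j$ whose first vector is $v_j$. Apply the $2\times 2$ rotation-type perturbations to the remaining $a-1$ (even) vectors, paired up in this basis; the resulting operator is conjugate to those blocks in that basis. Then $v_j$ is an eigenvector for $\lambda_j$. The $\lambda_j$-eigenspace of the perturbed matrix is exactly $\mathrm{span}(v_j)$, because on the complement the eigenvalues are $\lambda_j\pm i\varepsilon$ and other blocks have different eigenvalues.

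\textbf{Size of the perturbation.} The perturbation restricted to $E_j$ has norm $O(\norm{N_j})+O(\varepsilon\cdot\mathrm{cond})$. This is not small unless we first make $N_j$ small. We fix this by conjugating the Jordan block by $\mathrm{diag}(1,\delta,\delta^2,\dots)$, which makes the superdiagonal entries $\delta$ instead of $1$. The basis chosen depends only on $v_j$ and $E_j$, not on $\varepsilon$, so the condition number is fixed, and the scaling of chains keeps $v_j$ an eigenvector (up to a scalar). Hence all perturbations are $O(\varepsilon)$.

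\textbf{Conclusion.} The spectrum in $(r,+\infty)$ is then exactly $\{\lambda_1,\dots,\lambda_N\}$, all simple, with the prescribed eigenspaces. This last bookkeeping step, the uniform smallness, is where I expect the care to be needed.
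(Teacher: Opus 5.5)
Your argument has a genuine gap at the step where you ``kill the superdiagonal ones by an $\varepsilon$-perturbation''. Replacing the nilpotent part $N_j=(A-\lambda_j I)|_{E_j}$ of $A$ on a generalised eigenspace by $0$ is not a small perturbation, because the entries you delete are equal to $1$.

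Conjugating by $\mathrm{diag}(1,\delta,\delta^2,\dots)$ does not repair this. It changes the matrix that represents $N_j$, not the operator itself. The conjugating matrix becomes $T\,\mathrm{diag}(1,\delta,\dots)$, whose condition number grows like $\delta^{-(k-1)}$, where $k$ is the largest block size. Deleting the entries $\delta$ in these coordinates is, back in the original coordinates, once again exactly the operator $-N_j$. Your remark that ``the condition number is fixed'' applies only to the unscaled basis.

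Stated basis-independently: your $\widetilde{A}$ leaves $E_j$ invariant and acts on it as $\lambda_j I+R$ with $\norm{R}=O(\varepsilon)$, whereas $A$ acts as $\lambda_j I+N_j$. Hence $\norm{\widetilde{A}-A}\geq\norm{N_j}-O(\varepsilon)$. The construction therefore fails, in the even case as well as the odd case, as soon as some real eigenvalue above $r$ has a Jordan block of size at least $2$.

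The missing idea, which is what the paper uses, is that the nilpotent part never needs to be removed. Take a basis beginning with $v_1,\dots,v_N$. The paper uses a Jordan basis in which the $v_j$ are chain heads. If you go that way, the chain you modify must be one of \emph{minimal}, not maximal, length among those on which $v_j$ has a nonzero component. Alternatively, pass to the map induced on $\R^n/\mathrm{span}(v_1,\dots,v_N)$.

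In such a basis $A$ is conjugate to $\begin{pmatrix} D & \ast\\ 0 & C\end{pmatrix}$ with $D=\mathrm{diag}(\lambda_1,\dots,\lambda_N)$. Perturbing only $C$ keeps each $v_j$ an eigenvector for $\lambda_j$, and every real eigenvalue of $C$ above $r$ has even multiplicity. Put $C$ in (real) Jordan form and, for each such eigenvalue $\mu$, group the consecutive diagonal positions belonging to $\mu$ into pairs. Each diagonal $2\times 2$ block is then $\mu I_2$ or $\begin{pmatrix}\mu&1\\ 0&\mu\end{pmatrix}$. All remaining superdiagonal $1$'s lie strictly above the block diagonal, so the characteristic polynomial is the product of those of the diagonal blocks.

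Now replace the $2\times 2$ blocks by $\begin{pmatrix}\mu&-\varepsilon\\ \varepsilon&\mu\end{pmatrix}$ and $\begin{pmatrix}\mu&1\\ -\varepsilon&\mu\end{pmatrix}$ respectively. Their eigenvalues are $\mu\pm i\varepsilon$ and $\mu\pm i\sqrt{\varepsilon}$, so none are real. This is a genuine $O(\varepsilon)$ change under a fixed conjugation. Afterwards each $\lambda_j$ is a simple eigenvalue with eigenvector $v_j$, and there are no other eigenvalues in $(r,+\infty)$.

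Your special treatment of an eigenvalue equal to $r$ is unnecessary, since $(r,+\infty)$ is open. Working with a real Jordan basis, as you do, is a sensible way to keep $\widetilde{A}$ real.
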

\begin{proof}
    Extend the (possibly empty) set $\{v_1,\dots,v_N\}$ to a Jordan basis of 
    $\C^n$.
    Let $Q$ be the corresponding base-change matrix, whose first $N$ columns are $v_1,\dots,v_N$.
    Then we have  
    $
        A = Q J Q^{-1}
    $
    where $J$ is of the form
    $
        J = 
        \begin{pmatrix}
            D & B \\
            0 & C    
        \end{pmatrix}
    $
    with $D \in \R^{N \times N}$ being a diagonal matrix with diagonal entries $\lambda_1,\dots,\lambda_N$,
    $B \in \R^{N \times N}$,
    and a matrix in Jordan normal form $C \in \R^{(n - N) \times (N - n)}$ ,
    all of whose real eigenvalues above $r$ have even multiplicity.

    It remains to show that there exist arbitrarily small perturbations of $C$ such that 
    the perturbed matrices have no real eigenvalues greater than $r$.
    Up to permutation we may assume that $C$ is a block diagonal matrix with diagonal
    entries $C_1, \dots, C_M$, where each $C_i$ is a matrix of the form
    \[
        C_i = 
        \begin{pmatrix}
            \alpha_i  & x_{i,1}   &           &             \\    
                      & \alpha_i  & \ddots    &             \\    
                      &           & \ddots    & x_{i,K_i}   \\  
                      &           &           & \alpha_i        
        \end{pmatrix}
    \]
    with $\alpha_i$ being an eigenvalue of $C$ and $x_{i,j} \in \{0,1\}$ for $j = 1,\dots,K_i$.
    Further $\alpha_i \neq \alpha_j$ for $i \neq j$.

    Now, if $\alpha_i$ is a real eigenvalue of $C$ with $\alpha_i > r$, then the multiplicity of $\alpha_i$ as an eigenvalue of $C$ is even, so that 
    by grouping the diagonal entries into blocks of two,
    the matrix $C_i$ can be written in the form 
    \[
        C_i = 
        \begin{pmatrix}
            C_{i,1}  & Y_{i,1}   &           &                   \\    
                      & C_{i,2}  & \ddots    &                   \\    
                      &          & \ddots    & Y_{i,(K_i - 1)/2}         \\    
                      &          &           & C_{i,(K_i + 1)/2}   \\
        \end{pmatrix}
    \]
    where $Y_{i,j} \in \{0,1\}^{2\times 2}$ and the $C_{i,j}$s are $2\times 2$-matrices of the form 
    \[
        \begin{pmatrix}
            \alpha_i    &   \\
                        &  \alpha_i \\
        \end{pmatrix}
        \;
        \text{or}
        \;
        \begin{pmatrix}
            \alpha_i    &   1\\
                        &  \alpha_i \\
        \end{pmatrix}.
    \] 
    Now, we have $\det \left(C_i - t I_{(K_i+1)\times (K_i+1)}\right) = \prod_{i = 1}^{(K_i+1)/2} \det\left(C_{i,j} - tI_{2\times2}\right)$, 
    so that it suffices to show that the matrices of the above form admit arbitrarily small perturbations without real eigenvalues.
    Now, for any $\varepsilon > 0$ the above matrices can respectively be perturbed into the matrices 
    \[
        \begin{pmatrix}
            \alpha_i     &  -\varepsilon  \\
            \varepsilon  &  \alpha_i      
        \end{pmatrix}
        \;
        \text{and}
        \;
        \begin{pmatrix}
            \alpha_i     &   1\\
            -\varepsilon &  \alpha_i 
        \end{pmatrix}
    \]
    with characteristic polynomials 
    $\chi_1(z) = (\alpha_i - z)^2 + \varepsilon^2$
    and 
    $\chi_2(z) = (\alpha_i - z)^2 + \varepsilon$
    respectively.
    Clearly, neither matrix has any real eigenvalues.
    This proves the claim.
\end{proof}

\begin{lemma}\label{Lemma: robust trapped constraint perturbation lemma}
    Let $v \in \R^n$.
    Let $B_1, B_2 \in \R^n$ be linearly independent.
    Assume that $B_1\cdot v \geq 0$ and $B_2\cdot v \leq 0$.
    Then there exist arbitrarily small perturbations $\widetilde{v}$ of $v$
    with
    $B_{1}\cdot \widetilde{v} > 0$ and 
    $B_{2}\cdot \widetilde{v} < 0$.
\end{lemma}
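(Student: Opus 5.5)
The plan is to perturb $v$ along a single direction $w$ that strictly increases $B_1\cdot x$ and strictly decreases $B_2\cdot x$. Since $B_1$ and $B_2$ are linearly independent, the linear map $x \mapsto (B_1\cdot x, B_2\cdot x)$ from $\R^n$ to $\R^2$ has rank $2$, hence is surjective. We may therefore choose $w \in \R^n$ with
\[
    B_1\cdot w = 1, \qquad B_2\cdot w = -1.
\]
Explicitly, one can take $w$ in the span of $B_1, B_2$ and solve the $2\times 2$ Gram system. That system is invertible because its determinant $\norm{B_1}^2\norm{B_2}^2-(B_1\cdot B_2)^2$ is nonzero by the equality case of Cauchy--Schwarz.

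Next, for $t>0$ we set $\widetilde{v} = v + t w$. Linearity together with the hypotheses give
\[
    B_1\cdot\widetilde{v} = B_1\cdot v + t \geq t > 0,
    \qquad
    B_2\cdot\widetilde{v} = B_2\cdot v - t \leq -t < 0.
\]
Finally, given $\varepsilon>0$, we choose $t < \varepsilon/\norm{w}$, with $w\neq 0$ automatically. Then $\norm{\widetilde{v}-v} = t\norm{w} < \varepsilon$, so these are arbitrarily small perturbations with the required strict inequalities.

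There is no genuine obstacle here. The only point requiring care is the existence of $w$, which is exactly where the linear independence hypothesis is used. Without it, for instance if $B_2 = cB_1$ with $c>0$, the two strict inequalities could be incompatible.
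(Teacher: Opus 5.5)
Your proof is correct and takes essentially the paper's route: both perturb $v$ along a fixed direction $w$ with $B_1\cdot w>0>B_2\cdot w$, whose existence is where linear independence enters. The paper builds $w$ explicitly as $u_1/\norm{B_1}^2-u_2/\norm{B_2}^2$, where $u_1$ is the component of $B_1$ orthogonal to $B_2$ and $u_2$ that of $B_2$ orthogonal to $B_1$; you obtain $w$ from the rank-$2$ map or the Gram system, and your step size $t<\varepsilon/\norm{w}$ bounds the perturbation cleanly, whereas the paper's fixed factor $\varepsilon/4$ gives $\norm{\widetilde{v}-v}<\varepsilon$ only when $\norm{B_1}$ and $\norm{B_2}$ are not too small, so it needs a rescaling in general.
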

\begin{proof}
    Let $\varepsilon > 0$.
    Let
        $u_1 = B_{1} - \frac{B_{1} \cdot B_{2}}{B_{2} \cdot B_{2}} B_{2}$
and
        $u_2 = B_{2} - \frac{B_{2} \cdot B_{1}}{B_{1} \cdot B_{1}} B_{1}$.
    Since $B_{1}$ and $B_{2}$ are linearly independent, $u_1$ and $u_2$ are both non-zero.
    We have by construction $B_{1}\cdot u_2 = 0$ and $B_{2}\cdot u_1 = 0$.
    Further, by the Cauchy-Schwarz inequality 
    $B_{1}\cdot\frac{u_1}{B_{1}\cdot B_{1}} > 0$
    and
    $B_{2}\cdot\frac{u_2}{B_{2}\cdot B_{2}} > 0$.
    Let 
    $
        \widetilde{v} = v + \varepsilon/4 \frac{u_1}{B_{1}\cdot B_{1}} - \varepsilon/4 \frac{u_2}{B_{2}\cdot B_{2}}.
    $
    Then $\norm{\widetilde{v} - v} < \varepsilon$,
    $B_{1} \cdot \widetilde{v} > 0$,
    and
    $B_{2} \cdot \widetilde{v} < 0$.
\end{proof}

\begin{lemma}\label{Lemma: perturbing eigenvectors}
    Let $A \in \R^{n \times n}$ be a real matrix.
    Let $v_1,\dots, v_N \in \R^n$ be linearly independent eigenvectors for real eigenvalues $\rho_1,\dots,\rho_N$ of $A$.
    Then for all $\varepsilon > 0$ there exists $\delta > 0$ such that if $w_1,\dots,w_N \in \R^n$
    are vectors with $\norm{v_j - w_j} < \delta$ for all $j$, then there exists a matrix 
    $\widetilde{A}$ with $\norm{A - \widetilde{A}} < \varepsilon$ 
    such that $\widetilde{A}$ has the same eigenvalues as $A$, counted with multiplicity,
    and for all $j \in \{1,\dots,N\}$, we have $\widetilde{A} w_j = \rho_j w_j$.
\end{lemma}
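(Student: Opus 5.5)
Conjugate $A$ by an invertible matrix close to the identity that moves each $v_j$ to $w_j$. Conjugation preserves the characteristic polynomial, so the eigenvalues and their multiplicities are unchanged. Concretely, we look for $S \in \R^{n\times n}$ with $S v_j = w_j$ for all $j$ and $\norm{S - I}$ small, and set $\widetilde{A} = S A S^{-1}$. Then $\widetilde{A} w_j = S A v_j = \rho_j S v_j = \rho_j w_j$, and $\widetilde{A}$ has the same characteristic polynomial as $A$.

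To build $S$, we extend $v_1,\dots,v_N$ to a basis $v_1,\dots,v_n$ of $\R^n$, which is possible because the $v_j$ are linearly independent. Let $V$ be the invertible matrix with columns $v_1,\dots,v_n$. Let $W$ be the matrix with columns $w_1,\dots,w_N,v_{N+1},\dots,v_n$, and define $S = W V^{-1}$. By construction $S v_j = w_j$ for $j \leq N$ and $S v_j = v_j$ for $j > N$. Moreover
\[
    \norm{S - I} = \norm{(W - V)V^{-1}} \leq \norm{W - V}\,\norm{V^{-1}},
\]
and $\norm{W - V}$ is bounded by a constant (depending only on $n$ and the chosen norm) times $\max_j \norm{w_j - v_j} < \delta$. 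So $\norm{S - I} \leq c\,\delta$, where $c$ depends only on the fixed vectors $v_1,\dots,v_n$.

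It remains to make $\norm{SAS^{-1} - A}$ small, and this is the only step requiring an estimate. If $\norm{S - I} \leq 1/2$, a Neumann series argument shows $S$ is invertible with $\norm{S^{-1}} \leq 2$. We then write
\[
    SAS^{-1} - A = (S - I)AS^{-1} + A(S^{-1} - I),
\]
with $S^{-1} - I = S^{-1}(I - S)$. This gives
\[
    \norm{SAS^{-1} - A} \leq 2\norm{S - I}\,\norm{A} + 2\norm{A}\,\norm{S - I} \leq 4c\,\delta\,\norm{A}.
\]
Hence it suffices to choose $\delta < \min\{1/(2c),\ \varepsilon/(4c\norm{A} + 1)\}$. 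Then $\norm{\widetilde{A} - A} < \varepsilon$, and $\widetilde{A}$ has the required properties. No step is a real obstacle; the only point needing care is that $\delta$ depends only on $A$, on $\varepsilon$ and on the fixed basis extension of $v_1,\dots,v_N$, and not on the $w_j$, which the construction above ensures.
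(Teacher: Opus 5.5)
Your proof is correct and is essentially the paper's argument. Your $\widetilde{A} = SAS^{-1}$ with $S = WV^{-1}$ equals $W(V^{-1}AV)W^{-1}$, which is exactly the paper's $\widetilde{Q}\,(Q^{-1}AQ)\,\widetilde{Q}^{-1}$ with $\widetilde{Q}$ having columns $w_1,\dots,w_N,v_{N+1},\dots,v_n$; the only difference is that you replace the paper's appeal to continuity of matrix inversion with an explicit Neumann-series estimate, giving a concrete $\delta$.
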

\begin{proof}
    Extend $v_1,\dots, v_N$ to a basis $v_1,\dots,v_n$ of $\R^n$.
    Let $Q$ be the matrix with columns $v_1,\dots,v_n$. 
    Then $Q$ is non-singular and we have 
    $
        A = Q \begin{pmatrix} B   &   C \\ 0   &   D \end{pmatrix} Q^{-1}
    $
    where $B$ is a diagonal matrix with entries $\rho_1,\dots,\rho_N$.
    The set of non-singular matrices is open, and matrix inversion is a continuous operation on this set.
    It follows that for all $\varepsilon > 0$ there exists a $\delta > 0$ such that 
    for all $\widetilde{Q}$, if $\norm{\widetilde{Q} - Q} < \delta$,
    then $\widetilde{Q}$ is nonsingular and 
    $
        \norm{
        \widetilde{Q} \begin{pmatrix} B & C \\ 0 & D\end{pmatrix} \widetilde{Q}^{-1} 
        -
        Q \begin{pmatrix} B   &   C \\ 0   &   D \end{pmatrix} Q^{-1}} < \varepsilon.
    $
    The claim follows.
\end{proof}

\begin{lemma}\label{Lemma: perturbing matrices with 1 in the spectrum}
    Let $A \in \R^{n\times n}$, $B \in \R^{m \times n}$, $b \in \R^n$, $\eta \in \R^m$.
    Assume that $1 \in \sigma(A)$ and that $B \neq 0$.
    Then there exist arbitrarily small perturbations $\widetilde{A}$ of $A$ 
    and $\widetilde{\eta}$ of $\eta$ satisfying the following properties:
    \begin{enumerate}
        \item $1 \notin \sigma\left(\widetilde{A}\right)$.
        \item The unique fixed point of the map $x \mapsto \widetilde{A} x + b$ is not contained in the closed polyhedron 
        $\overline{P}\left(B, \widetilde{\eta}\right) = \Set{x \in \R^n}{Bx \vgeq \eta}$.
        \item A complex number $\lambda \in \C$ with $\left|\lambda\right| > 1$ is an eigenvalue of $A$ with multiplicity $\mu$ if and only if it is an eigenvalue of $\widetilde{A}$ with multiplicity $\mu$.
        Moreover, a vector $v \in \C^n$ is a generalised eigenvector of rank $r$ of $A$ for the eigenvalue $\lambda$ 
        if and only if it is a generalised eigenvector of rank $r$ of $\widetilde{A}$ for the eigenvalue $\lambda$.
    \end{enumerate}
\end{lemma}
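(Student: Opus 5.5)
The plan is to change $A$ only on the generalised eigenspace of the eigenvalue $1$, so that item~3 comes for free, and then to use the remaining freedom, together with the perturbation of $\eta$, to move the fixed point out of $\overline{P}(B,\widetilde\eta)$.

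\emph{Step 1 (item~3).} Let $E_1\subseteq\R^n$ be the real generalised eigenspace of $A$ for the eigenvalue $1$, and let $F$ be the sum of all other real generalised eigenspaces. Both are $A$-invariant and $\R^n=E_1\oplus F$. Write $A|_{E_1}=I+N$ with $N$ nilpotent, and let $\Pi$ be the projection onto $E_1$ along $F$. Consider perturbations
\[
\widetilde A=A+\Pi(tI+S)\Pi ,
\]
where $t<0$ is small and $S$ is a small operator on $E_1$ chosen so that $tI+N+S$ is invertible with all eigenvalues in the open unit disc.

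Then $\widetilde A|_F=A|_F$. Hence every eigenvalue $\lambda$ with $|\lambda|>1$ has the same multiplicity and the same generalised eigenvectors of each rank, because these all lie in $F\otimes\C$. This gives item~3. Moreover $1\notin\sigma(\widetilde A)$, which gives item~1.

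\emph{Step 2 (the fixed point).} Decompose $b=b_1+b_2$ with $b_1\in E_1$ and $b_2\in F$. The fixed point of $x\mapsto\widetilde Ax+b$ is
\[
x_t=-(tI+N+S)^{-1}b_1-(A|_F-I)^{-1}b_2 .
\]
The $F$-component is independent of the perturbation.

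Suppose $b_1\neq0$, and take $S=0$ first. The Neumann-type expansion
\[
(tI+N)^{-1}=\sum_k(-1)^kt^{-k-1}N^k
\]
shows that $x_t\approx c\,t^{-K-1}u$, where $u=N^Kb_1\neq0$ is an eigenvector for $1$ and $K$ is maximal with $N^Kb_1\neq0$. As $t\to0^-$, the norm $\|x_t\|$ blows up in the fixed direction $\pm u$. If $B(\pm u)$ has a negative entry, then $x_t\notin\overline{P}(B,\eta)$ for $t$ small, and we may take $\widetilde\eta=\eta$.

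Otherwise I would use $S$ to steer the blow-up direction. Choosing $S$ so that $N+S$ has a prescribed small kernel direction $w\in E_1$ and $b_1$ is not in its range makes the blow-up direction $w$. A $w\in E_1$ with some $B_jw<0$ exists whenever $B|_{E_1}\neq0$, by flipping signs if necessary.

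\emph{Step 3 (remaining cases, and the expected obstacle).} The delicate cases are those in which $b_1=0$, or $B$ vanishes on $E_1$. In these cases $x_t$ stays bounded, or $Bx_t$ is independent of the perturbation. The only remaining tool is the perturbation $\widetilde\eta$. This works when $Bx^*$ lies on the boundary, e.g. $B_jx^*=\eta_j$ for some $j$, because then raising $\eta_j$ slightly excludes $x^*$.

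I expect this to be the main obstacle. In general these cases seem to fail as stated. For example, take $n=m=1$, $A=1$, $b=0$, $B=1$ and $\eta=-5$. Every perturbed fixed point is $0$, and $0\in\overline{P}(1,\widetilde\eta)$ for every $\widetilde\eta$ near $-5$.

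So at this point I would check how the lemma is applied later. The likely fix is to allow a perturbation of $b$ as well; that corresponds to the homogenised instance in which $b$ is itself data. With such a perturbation one can add a small $\varepsilon w$ to $b_1$, with $w\in E_1$ the eigenvector for $1$ and $Bw$ not $\vgeq0$, which reduces to the blow-up case of Step~2. I would attempt the proof with this strengthened hypothesis, or with the extra assumption $b_1\neq0$ and $B|_{E_1}\neq0$, and flag the discrepancy.
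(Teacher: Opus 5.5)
Your counterexample is correct, so the lemma cannot be proved as stated. With $n=m=1$, $A=1$, $b=0$, $B=1$, $\eta=-5$, every admissible $\widetilde A$ has fixed point $0$, and $0\in\overline{P}(B,\widetilde\eta)$ for all $\widetilde\eta$ near $-5$. The paper's proof fails at exactly this point. The solution $x(w)$ of $PL\widetilde U(w)x=-b$ is affine in $1/w$, not in $w$: its last coordinate is $c_n/w$ with $c=-(PL)^{-1}b$. So when $c_n=0$ (e.g.\ $b=0$), the claimed ``line minus a cylinder'' is a single point. Your idea of changing $A$ only on the generalised $1$-eigenspace $E_1$ is also the right way to secure item~3; the LU-based perturbation does not preserve item~3 in general.

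Your proposed repair does not work, however, and the steering step in Step~2 has a real gap. Item~3 constrains the eigenvalues of $\widetilde A|_{E_1}=(1+t)I+N+S$ to the closed unit disc, not those of $tI+N+S$ as you wrote. Hence all real eigenvalues of $tI+N+S$ are negative, and $\det(tI+N+S)$ has sign $(-1)^{\dim E_1}$. A prescribed kernel for $N+S$ fixes only a line, and the sign of the blow-up along it is constrained. For $\dim E_1=1$ the blow-up simply follows the direction of $b_1$, so ``flipping signs'' is unavailable.

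Allowing perturbations of $b$ rescues your example, but not this one: $n=m=1$, $A=1$, $b=1$, $B=1$, $\eta=0$. Here $b_1\neq 0$ and $B|_{E_1}\neq 0$, yet item~3 forces $\widetilde A=a<1$. The fixed point $\widetilde b/(1-a)$ is then large and positive, so it lies in $\overline{P}(\widetilde B,\widetilde\eta)$ for every small perturbation of all four data. The paper's proof hides this by letting $w_0$ take either sign; here $w_0>0$ gives $\widetilde A=1+w_0>1$, violating item~3.

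Worse, this instance is robustly trapped:
\begin{itemize}
\item for $a<1$ the fixed point lies in the polyhedron;
\item for $a=1$ orbits drift to $+\infty$;
\item for $a>1$ points to the right of the (negative) fixed point move monotonically to $+\infty$.
\end{itemize}
Yet it violates \eqref{eq: robust affine trapped}, since $1\in\sigma(A)$ and $\sigma^{\text{odd}}_{>1}(A)=\emptyset$. So Theorem~\ref{Theorem: characterisation of robust affine instances}(\ref{Item: robust affine trapped}) is itself false. No version of this lemma strong enough for its use in Section~\ref{Section: Robust Trapped Affine} can hold, whether you also perturb $b$ or assume $b_1\neq 0$ and $B|_{E_1}\neq 0$. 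A correct characterisation must at least handle the case $b\notin(A-I)\R^n$. There $\widehat A$ has a Jordan block of size at least two at $1$, and orbits can drift to infinity inside the polyhedron.
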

\begin{proof}
    Let $\varepsilon > 0$.
    Consider the equation $(A - I)x = -b$.
    The matrix $A - I$ admits an LU-decomposition,
    $
        A - I = PLU 
    $,
    where $P$ is a permutation matrix, $L$ is a lower echelon matrix whose diagonal entries are all equal to $1$,
    and $U$ is an upper echelon matrix.
    We may further assume that exactly the last $r = n - \rank(A - I)$ rows of $U$ are zero.
    By assumption we have $\rank(A - I) < n$, so that $r \geq 1$.
    Let $\widetilde{U}$ be the matrix that results from $U$ by replacing the diagonal entries of all zero rows but the last by $-\varepsilon$.
    For $w \in \R$, let $\widetilde{U}(w)$ be the matrix that results from $\widetilde{U}$ by replacing the last matrix entry
    $\widetilde{U}_{n,n}$ by $w$.
    Whenever $w \neq 0$, the equation $PL\widetilde{U}(w)x = -b$ has a unique solution $x(w)$, which depends (affine) linearly on $w$.
    The set 
    $
        \Set{x(w)}{w \in [-\varepsilon,\varepsilon] \setminus \{0\}}
    $
    is the difference between a straight line and the cylinder $\R^{n-1} \times \left[-|b_n|/\varepsilon, |b_n|/\varepsilon\right]$.
    Since $B$ is not zero, it has a non-zero row, which defines a hyperplane in $\R^n$.
    It is now easy to see that there exists an arbitrarily small perturbation $\widetilde{\eta}$ of $\eta$ 
    such that there exists $w_0 \in [-\varepsilon, \varepsilon]$ with $x(w_0) \notin \overline{P}(B, \widetilde{\eta})$.
    Now, let 
    $\widetilde{A} = PL\widetilde{U}(w_0) + I$.
    It is straightforward to verify that $\widetilde{A}$ has all the claimed properties by construction.
\end{proof}

\section{Robust Escaping Instances of the Linear Problem} 
\label{Section: Robust Escape Linear}

We now prove the first item of Theorem \ref{Theorem: characterisation of robust linear instances}.
    Assume that $\left(A, B\right)$ satisfies \eqref{eq: robust escaping condition}.
    Let $x \in \R^{n}$ be a point with $B x \vgt 0$.
    By standard linear algebra, the matrix $A$ admits a real Jordan decomposition.
    Thus, we can write $\R^n = E_0 \oplus E_{> 0} \oplus E_{\text{rest}}$,
    where $E_{0}, E_{>0}, E_{\text{rest}} \subseteq \R^n$ are linear subspaces of $\R^n$, potentially equal to zero, where $E_0$ is spanned by the generalised eigenvectors for the eigenvalue $0$ of $A$, 
    $E_{> 0}$ is spanned by the generalised eigenvectors for the positive real eigenvalues of $A$, 
    $E_{\text{rest}}$ is invariant under $A$, and the restriction of $A$ to $E_{\text{rest}}$ has no non-negative real eigenvalues.
    
    In particular, we can uniquely write $x = x_{0} + x_{> 0} + x_{\text{rest}}$ 
    for vectors
    $x_{0} \in E_0$, $x_{>0} \in E_{>0}$, $x_{\text{rest}} \in E_{\text{rest}}$.
    By Corollary \ref{Corollary: matrix iterations without positive real eigenvalues}, the sequence 
    $
        \left(B_j A^k (x_{0} + x_{\text{rest}})\right)
    $
    assumes infinitely many non-positive values for all $j$.
    It hence suffices to show that at least one of the sequences 
    $(B_j A^k x_{>0})_k$
    for $j \in \{1,\dots,m\}$
    is eventually non-positive.

    If $x_{> 0} = 0$, then the sequence $(B_j A^k x_{>0})_k$ is zero and hence non-positive for all $j$.
    If $x_{> 0} \neq 0$, then by Lemma \ref{Lemma: linear map with positive real eigenvalues} 
    there exists a positive real eigenvalue $\rho$ of $A$ 
    and a non-negative integer such that the sequence 
    $
        \binom{k}{d}^{-1} \rho^{k} A^k x_{> 0}
    $
    converges to an eigenvector $v$ of $A$ for the eigenvalue $\rho$.

    Since the instance satisfies \eqref{eq: robust escaping condition}, there must exist an index $j \in \{1,\dots,m\}$ with 
    $
        B_j v < 0
    $.
    Now, since multiplication with $B_j$ is continuous and $\binom{k}{d}^{-1} \rho^{k}$ is positive for all $k$, we must have 
    $
        B_j A^k x_{> 0} < 0
    $
    for all sufficiently large $k$.
    This shows that the instance is escaping.
    It must be robustly escaping, since the formula \eqref{eq: robust escaping condition} is semidecidable by Proposition \ref{Proposition: computability}, so that the set of all instances that satisfy the formula is open.

    Now, assume that the formula \eqref{eq: robust escaping condition} is violated for $(A,B)$.
    Then, by definition,
    \[
        \exists \lambda\in\sigma_{\ge0}(A). \exists v\in S^{n-1}.
        A v=\lambda v \land 
        \forall j \in \{1,\dots,m\}. B_j v\geq 0.
    \]
    Thus, there exists a non-negative eigenvalue $\lambda \geq 0$ of $A$ with eigenvector $v$
    such that $B_j v \geq 0$ for all $j$.
    It is easy to see that there exists an arbitrarily small perturbation $\widetilde{A}$ of $A$
    such that $\widetilde{A}$ has a positive real eigenvalue $\widetilde{\lambda} > 0$  
    with $\widetilde{A}v = \widetilde{\lambda} v$.
    Now, we can replace each row $B_j$ of $B$ by $\widetilde{B}_j = B_j + \varepsilon v^T$ for an arbitrarily small $\varepsilon > 0$
    to obtain $\widetilde{B}_j v = \widetilde{B}_j v + \varepsilon \norm{v} > 0$ for all $j$.
    Then the point $x = v$ is trapped in the polyhedron $P\left(\widetilde{B}\right)$ under $\widetilde{A}$, for we have 
    $\widetilde{B} \widetilde{A}^k x = \widetilde{\lambda}^k \widetilde{B}x \vgt 0$ for all $k$.
\section{Robust Trapped Instances of the Linear Problem}
\label{Section: Robust Trapped Linear}

We now prove the second item of Theorem \ref{Theorem: characterisation of robust linear instances}.
    First, assume that \eqref{eq: robust trappedness condition} holds true.
    Then $A$ has a positive real eigenvalue $\lambda > 0$ with odd multiplicity.
    Pick a unit eigenvector $v$ for the eigenvalue $\lambda$.
    Then we either have $Bv \vgt 0$ or $B(-v) \vgt 0$.
    In the first case, the point $v$ is trapped in $P(B)$,
    in the second case, the point $-v$ is trapped.
    This shows that the instance is trapped.
    It must be robustly trapped, since the formula \eqref{eq: robust trappedness condition} is semidecidable by Proposition \ref{Proposition: computability}, so that the set of all instances that satisfy the formula is open.

    Now, assume that \eqref{eq: robust trappedness condition} is false.
    We will construct arbitrarily small perturbations of the instance that are escaping.
    By assumption we have:
    \begin{equation}\label{eq: negation of robust trappedness}
        \forall \lambda\in\sigma^{\text{odd}}_{>0}(A).\exists v\in S^{n-1}.
        A v=\lambda v \land 
        \exists j. B_j v \leq 0 \land \exists j. B_j v \geq 0.
    \end{equation}
    
    Let $0< \rho_1 < \dots < \rho_N$, for $N \geq 0$, denote the positive real eigenvalues of $A$ with odd algebraic multiplicity.
    For each $\ell \in \{1,\dots,N\}$, pick a witness $v_\ell$ of \eqref{eq: negation of robust trappedness} for $\rho_\ell$,
    \textit{i.e.}, an eigenvector for the eigenvalue $\rho_\ell$ such that $B_{j_1} v_\ell \leq 0$ and 
    $B_{j_2} v_\ell \geq 0$ for some indexes $j_1, j_2 \in \{1,\dots,m\}$.

    By Lemma \ref{Lemma: trapped point perturbation lemma}, there exists an arbitrarily small perturbation
    $\widetilde{A}_1$ of $A$ such that the positive real eigenvalues of $\widetilde{A}_1$ are 
    $\rho_1,\dots,\rho_N$,
    each $\rho_\ell$ is a simple eigenvalue of $\widetilde{A}_1$, and the corresponding eigenspace is spanned by 
    $v_\ell$.

    Assume that at least two of the rows $B_1,\dots, B_m$ of $B$ are linearly independent.
    We will return to the (easy) case where all of these vectors are pairwise linearly dependent at the end of the proof.
    We will show that there exist arbitrarily small perturbations $w_1, \dots, w_m$
    of the vectors $v_1,\dots, v_m$ such that for all $\ell$ there exist $j_1$ and $j_2$ such that 
    $B_{j_1} w_\ell < 0$ and $B_{j_2} w_\ell > 0$.
    Let $j \in \{1,\dots,m\}$.
    Then by construction there exist $j_1$ and $j_2$ such that 
    $B_{j_1} v_\ell \leq 0$ and $B_{j_2} v_\ell \geq 0$.
    If $B_{j_1}$ and $B_{j_2}$ are linearly independent, then we can apply Lemma \ref{Lemma: robust trapped constraint perturbation lemma} to find an arbitrarily small perturbation $w_\ell$ of $v_\ell$ with the desired property.
    If they are linearly dependent, then $B_{j_1} = c B_{j_2}$ for some constant $c$.
    If $c < 0$ then for all $x \in \R^n$, the inequality $B_{j_1} x > 0$ implies $B_{j_2} x < 0$, so that the polyhedron is empty and the instance is escaping for trivial reasons.
    Hence, let us assume that $c \geq 0$. 
    Then we must have $B_{j_1} v_\ell = B_{j_2} v_\ell = 0$.
    Now, since not all of the vectors $B_1, \dots, B_m$ are pairwise linearly dependent, there must exist $j_3$ such that $B_{j_3}$ is linearly independent of $B_{j_1}$.
    Since $B_{j_1} v_{\ell} \geq 0$ and $B_{j_1} v_{\ell} \leq 0$, we can apply Lemma \ref{Lemma: robust trapped constraint perturbation lemma} to $B_{j_1}$ and $B_{j_3}$ to obtain the desired perturbation.
    This proves the claim.

    By Lemma \ref{Lemma: perturbing eigenvectors} we can further perturb the matrix $\widetilde{A}_1$ into a matrix $\widetilde{A}_2$ such that the positive real eigenvalues of $\widetilde{A}_2$ are $\rho_1,\dots,\rho_N$, all of these eigenvalues are simple, and their eigenspaces are spanned by the vectors $w_1,\dots, w_N$ respectively.
    We can write $\R^n$ as a direct sum 
    $
        \R^n = \R w_1 \oplus \dots \oplus \R w_N + R
    $
    where $R$ is a linear subspace of $\R^n$, invariant under $\widetilde{A}_2$, and the restriction of $\widetilde{A}_2$ to $R$ has no positive real eigenvalues. 
    Let $x \in \R^n$.
    Then we can write 
    $
        x = \alpha_1 w_1 + \dots + \alpha_N w_N + r
    $
    for unique real numbers $\alpha_1,\dots, \alpha_N$ and a vector $r \in R$.

    We have by definition for all $j$ and $k$:
    \[
        B_j \widetilde{A}_2^k x = \left(\alpha_1 B_j w_1\right) \rho_1^k + \dots + \left(\alpha_N B_j w_N\right) \rho_N^k + B_j \widetilde{A}_2^k r.
    \]
    By Corollary \ref{Corollary: matrix iterations without positive real eigenvalues}, the sequence
    $B_j \widetilde{A}_2^k r$ assumes non-positive values infinitely often for all $j$.

    It hence suffices to show that the sequence 
    $
        \left(\alpha_1 B_j w_1\right) \rho_1^k + \dots + \left(\alpha_N B_j w_N\right) \rho_N^k
    $
    assumes non-positive values infinitely often.
    If $\alpha_1 = \dots = \alpha_N = 0$, then this is obvious.
    If not all $\alpha_{\ell}$s are zero, let $L$ be the largest integer such that $\alpha_L \neq 0$.

    Now, by construction of $w_L$ there exists $j$ such that $\alpha_{L} B_j w_{L} < 0$.
    Since $\rho_{L} > \rho_j$ for all $j < L$ and $\alpha_j = 0$ for $j > L$, the sequence 
    $
        \left(\alpha_1 B_j w_1\right) \rho_1^k + \dots + \left(\alpha_N B_j w_N\right) \rho_{N}^k
    $
    is eventually negative.
    Since $B_j \widetilde{A}_2^k r$ is non-positive infinitely often, 
    the sequence $(B_j \widetilde{A}_2^k x)_k$ assumes a negative value infinitely often, so that 
    $x$ escapes the polyhedron $P(B)$ under iteration of $\widetilde{A}_2$.

    It remains to consider the possibility that $B_1,\dots,B_m$ are all linearly dependent.
    It follows as before that either the polyhedron is empty or we must have $B_j v_\ell = 0$ for all $j$ and all $\ell$.
    In the first case, the instance is escaping, so it remains to consider the second case.
    Any $x \in \R^n$ can be decomposed as above:
    $
        x = \alpha_1 v_1 + \dots + \alpha_N v_N + r.
    $
    We have by definition for all $j$, using that $B_j v_\ell$ vanishes for all $\ell$:
    $
        B_j \widetilde{A}_2^k x = B_j \widetilde{A}_2^k r.
    $
    It follows as above that $B_j \widetilde{A}_2^k x$ assumes non-positive values infinitely often.

\section{Robust Escaping Instances of the Affine Problem}
\label{Section: Robust Escape Affine}

We now prove the first item of Theorem \ref{Theorem: characterisation of robust affine instances}.
    Assume that the instance satisfies the formula.
    We show that every $x \in P(\widehat{B})$ escapes $P(\widehat{B})$ under iteration of $\widehat{A}$.
    We have
    $\R^{n + 1} = E_{\geq 1} \oplus E_{< 1} \oplus E_{\text{rest}}$
    where $E_{\geq 1}$ is the space spanned by the generalised eigenvectors of $\widehat{A}$ for eigenvalues in $(1, + \infty)$,
    $E_{< 1}$ is the space spanned by the generalised eigenvectors of $\widehat{A}$ for eigenvalues in $(0, 1)$,
    and $E_{\text{rest}}$ is invariant under $A$, and the restriction of $A$ to $E_{\text{rest}}$ has no positive real eigenvalues.

    Let $x \in P(\widehat{B})$.
    Then we can write $x = x_{\geq 1} + x_{< 1} + x_{\text{rest}}$
    for unique $x_{\geq 1} \in E_{\geq 1}$, $x_{< 1} \in E_{< 1}$, and $x_{\text{rest}} \in E_{\text{rest}}$.
    Since the $(n + 1)^{\text{st}}$ component of $x$ is strictly greater than $0$, we must have $x_{\geq 1} \neq 0$.
    Let $1 \leq \rho_1 < \dots < \rho_N$ denote the real eigenvalues of $A$ above $1$.
    We can further write 
    $E_{\geq 1} = E_1 \oplus \dots \oplus E_N$
    where $E_\ell$ is the space of generalised eigenvectors for the eigenvalue $\rho_\ell$.
    Then we have $x_{\geq 1} = x_1 + \dots + x_N$ for unique vectors $x_\ell \in E_{\ell}$.
    Let $M$ be the largest integer such that $x_M \neq 0$.
    Then by Lemma \ref{Lemma: linear map with positive real eigenvalues}, 
    there exists $d \geq 0$ such that 
    the sequence 
    $\binom{k}{d}^{-1}\rho_M^{-k} \widehat{A}^k\left(x_{\geq 1} + x_{< 1}\right)$
    converges to an eigenvector $v$ for the eigenvalue $\rho_M$.
    By assumption, there exists $j$ such that $\widehat{B}_j v < 0$.
    It follows that the sequence $\widehat{B}_j\widehat{A}^k\left(x_{\geq 1} + x_{< 1}\right)$ is eventually negative.
    Now, by Corollary \ref{Corollary: matrix iterations without positive real eigenvalues}, the sequence 
    $\widehat{B}_j \widehat{A}^k x_{\text{rest}}$
    assumes a non-positive value infinitely often.
    Overall, it follows that the sequence $\widehat{B}_j \widehat{A}^k x$ assumes infinitely many negative values.
    In particular, the point escapes the polyhedron under iteration of $\widehat{A}$.
    It follows that the point robustly escapes the polyhedron, since the formula is semidecidable by Proposition \ref{Proposition: computability}, so that the set of all instances that satisfy the formula is open.

    Now assume that the instance does not satisfy the formula.
    Then we have:
    \[
        \exists \lambda \in \sigma_{\geq 1}\left(\widehat{A}\right).
        \exists \widehat{v} \in S^{n}. 
        \left(
            \widehat{A}\widehat{v} = \lambda \widehat{v}
            \;\land\;
            \forall j. \widehat{B}_j \widehat{v} \geq 0.
        \right)
    \]
    We show that there exists a trapped point under arbitrarily small perturbations of $A$ and $B$.
    Let $\lambda$ and $\widehat{v}$ witness the above.
    Let $z$ be the $(n+1)^{\text{st}}$ entry of $\widehat{v}$.
    We have 
    $z = \widehat{B}_{m+1} \widehat{v} \geq 0$.
    If $z > 0$, then we must have $\lambda = 1$.
    It follows that the point $x^* = \left(\widehat{v}_1/z,\dots,\widehat{v}_n/z\right)$ is a fixed point of the map $x \mapsto Ax + b$.
    Further, we have for all $j \in \{1,\dots,m\}$:
    $
        \widehat{B}_j \widehat{v} = z \left(B_j x^* - \eta_j\right) \geq 0.
    $
    Since $z > 0$ we obtain 
    $B_j x^* - \eta_j \geq 0$.
    Up to arbitrarily small perturbation of $B$, these inequalities are strict.
    Then the map $x \mapsto Ax + b$ has a fixed point in the polyhedron, which in particular is trapped.

    If $z = 0$, then $v = (\widehat{v}_1,\dots,\widehat{v}_n)$ is an eigenvector of $A$ for the eigenvalue $\lambda$.
    It is easy to see that there exists an arbitrarily small perturbation $\widetilde{A}$ of $A$
    such that $1 \notin \sigma\left(\widetilde{A}\right)$ and $v$ is an eigenvector of $\widetilde{A}$ for an eigenvalue 
    $\widetilde{\lambda} > 1$.
    Then the map $f(x) = \widetilde{A}x + b$ has a unique fixed point $x^*$.
    Consider the point $x = x^* + v$.
    Then we have 
    $
        f^{k}(x) = x^* + \widetilde{\lambda}^k v.
    $
    Now, letting $\widetilde{B}$ be the matrix with columns $\widetilde{B}_j = B_j + \delta v^T$ for arbitrarily small $\delta > 0$,
    we have $\widetilde{B} v \vgt 0$, 
    so that $\widetilde{B} f^{k}(x) \vgt \eta$ for all sufficiently large $k$.
\section{Robust Trapped Instances of the Affine Problem}
\label{Section: Robust Trapped Affine}

We now prove the second item of Theorem \ref{Theorem: characterisation of robust affine instances}.
    Assume that the instance satisfies the formula.
    If the instance satisfies the clause
    $
        1 \notin \sigma(A) \land B \cdot (A - I)^{-1} b \vlt -\eta 
    $
    then the map $x \mapsto Ax + b$ has a unique fixed point, given by 
    $x^* = -(A - I)^{-1} b$.
    We have 
    $
        B x^* = -B \cdot (A - I)^{-1} b \vgt \eta.
    $
    Thus, $x^*$ is trapped in the polyhedron under $x\mapsto Ax + b$.

    If the instance does not satisfy the first clause, then it must satisfy
    \[
        \exists \lambda\in\sigma^{\text{odd}}_{> 1}(A).\forall v\in S^{n-1}.
        \left(
        A v=\lambda v\;\Longrightarrow\;
        \left(\forall j.B_j v>0 \lor \forall j.B_j v<0\right)
        \right).
    \]
    Let $\lambda > 1$ witness the above.
    Then there exists an eigenvector $v$ for $\lambda$ satisfying $B v \vgt 0$.
    Let $E_1$ be the space of generalised eigenvectors of $A$ for the eigenvalue $1$.
    Let $E_2$ be the space spanned by the generalised eigenvectors of $A$ for all other eigenvalues.
    Then $E_1$ and $E_2$ are invariant under $A$ and we have 
    $
        \R^n = E_1 \oplus E_2
    $.
    In particular, there exist unique linear projections $\pi_j \colon \R^n \to E_j$ such that we have 
    $x = \pi_1(x) + \pi_2(x)$
    for all $x \in \R^n$.
    Let $f(x) = Ax + b$.
    Since the restriction of $A$ to the space $E_2$ does not have the eigenvalue $1$, there exists 
    a unique $x^* \in E_2$ with $f(x^*) = x^* + x_1$, where $x_1 \in E_1$.
    Consider the point $x = x^* + v$.
    Then we have
    $
        f^{k}(x) = A^k v + f^{k}(x^*) = A^k v + x^* + A^k x_1 = \lambda^k v + x^* + A^k x_1.
    $
    Since $x_1 \in E_1$, the norm of the vector $A^k x_1$ is a polynomial in $k$.
    It follows that 
    $
        \lambda^{-k} f^{k}(x) \to v
    $
    as $k \to \infty$.
    Hence,
    $\lambda^{-k} B f^{k}(x) \to B v \vgt 0$
    as $k \to \infty$.
    It follows that $B f^{k}(x) \vgt \eta$ for all sufficiently large $k$.
    Hence, the instance is trapped.
    It must be robustly trapped, since the formula is semidecidable by Proposition \ref{Proposition: computability}, so that the set of all instances that satisfy the formula is open.
    
    Now, assume that the formula does not hold true.
    Then we have:
    \begin{align*}
        &\left( 1 \in \sigma(A) \lor B \cdot (A - I)^{-1} b \not\vlt -\eta \right)\\
        &\land
        \forall \lambda\in\sigma^{\text{odd}}_{> 1}(A).\exists v\in S^{n-1}.
        \left(
        A v=\lambda v \land 
        \exists j. B_j v\leq0 \land \exists j. B_j v\geq0
        \right).
    \end{align*}
    If $B = 0$, then it is easy to see that $P(B,\eta)$ is empty up to an arbitrarily small perturbation of $B$ and $\eta$, so that the instance escapes for trivial reasons.
    We may hence assume that $B \neq 0$.
    If $1 \notin \sigma(A)$, then by the above formula, the unique fixed point $x^*$ of the map $x \mapsto Ax + b$ is not contained in $P(B, \eta)$.
    Thus, there exists an arbitrarily small perturbation $\widetilde{\eta}$ of $\eta$ such that $x^*$ is not contained in 
    $\overline{P}\left(B, \widetilde{\eta}\right)$.
    If $1 \in \sigma(A)$, then by Lemma \ref{Lemma: perturbing matrices with 1 in the spectrum}, there exist arbitrarily small perturbations $\widetilde{A}_1$ of $A$
    and $\widetilde{\eta}$ of $\eta$ such that 
    $1 \notin \sigma\left(\widetilde{A}_1\right)$,
    the unique fixed point of the map $x \mapsto \widetilde{A}_1 x + b$ is not contained in the closed polyhedron $\overline{P}\left(B, \widetilde{\eta}\right)$,
    a complex number $\lambda \in \C$ with $\left|\lambda\right| > 1$ is an eigenvalue of $A$ with multiplicity $\mu$ if and only if it is an eigenvalue of $\widetilde{A}_1$ with multiplicity $\mu$,
    and a vector $v \in \C^n$ is a generalised eigenvector of rank $r$ of $A$ for the eigenvalue $\lambda$ 
        if and only if it is a generalised eigenvector of rank $r$ of $\widetilde{A}_1$ for the eigenvalue $\lambda$.
    In either case, there exist arbitrarily small perturbations $\widetilde{A}_1$ of $A$ and $\widetilde{\eta}$ of $\eta$ with the above properties.

    Let $\sigma^{\text{odd}}_{> 1}(\widetilde{A}_1) = \{\rho_1,\dots,\rho_N\}$ where $\rho_1 < \dots < \rho_N$.
    Then for all $\ell \in \{1,\dots,N\}$ there exists an eigenvector $v_{\ell}$ of $\widetilde{A}_1$ for $\rho_{\ell}$ and a $j \in \{1,\dots, m\}$ such that $B_j v_{\ell} \leq 0$.
    By Lemma \ref{Lemma: trapped point perturbation lemma} there exists an arbitrarily small perturbation $\widetilde{A}_2$ of $\widetilde{A}_1$ such that $\sigma\left(\widetilde{A}_2\right)_{> 1} = \{\rho_1,\dots,\rho_N\}$,
    and the eigenspace for $\rho_{\ell}$ is spanned by $v_{\ell}$ for all $\ell$.
    Since the unique fixed point of $x \mapsto \widetilde{A}_1x + b$ depends continuously on $\widetilde{A}_1$ and the polyhedron $\overline{P}\left(B, \widetilde{\eta}\right)$ is closed, we can further ensure that the unique fixed point of the map $x \mapsto \widetilde{A}_2x + b$ is not contained in the closed polyhedron $\overline{P}\left(B, \widetilde{\eta}\right)$.

    Assume that $B$ has at least two linearly independent rows. 
    Then, arguing as in the proof of Theorem \ref{Theorem: characterisation of robust linear instances}(\ref{Item: robust linear trapped}),
    we can (without loss of generality) use Lemmas \ref{Lemma: robust trapped constraint perturbation lemma} and 
    \ref{Lemma: perturbing eigenvectors} to find an arbitrarily small perturbation $\widetilde{A}_3$ of $\widetilde{A}_2$ such that 
    $\widetilde{A}_3$ has the same eigenvalues as $\widetilde{A}_2$, counted with multiplicities
    and 
    for all $\ell \in \{1,\dots,N\}$ the eigenspace of $\widetilde{A}_3$ for $\rho_{\ell}$ is spanned by a vector $w_{\ell}$ with 
    $B_j w_{\ell} < 0$ for some $j$.
    As above, we can further ensure that the unique fixed point of the map $x \mapsto \widetilde{A}_3x + b$ is not contained in the closed polyhedron $\overline{P}\left(B, \widetilde{\eta}\right)$.

    Now, let $x \in \R^n$. Let $x^*$ be the unique fixed point of the map that sends $x$ to $\widetilde{A}_3x + b$.
    Let $E_0$ denote the space spanned by the generalised eigenvectors of $\widetilde{A}_3$ for eigenvalues in $\C \setminus (0,+\infty)$.
    Let $E_1$ denote the space spanned by the generalised eigenvectors of $\widetilde{A}_3$ for eigenvalues in $(0,1)$.
    Since $1$ is not an eigenvalue of $\widetilde{A}_3$, the spectrum of $\widetilde{A}_3$ is covered by the sets $(0,1)$,
    $(1,+\infty)$, and $\C \setminus (0,+\infty)$.
    It follows that there exist unique real numbers $c_1,\dots,c_N$ and vectors $r_{0} \in E_0$, $r_{1} \in E_1$ such that 
    \[
        x = x^* + c_1 w_{1} + \dots + c_N w_{N} + r_{0} + r_{1}.
    \]
    Let $f(x) = \widetilde{A}_3x + b$.
    Then we have:
    \[
        f^{k}(x) = x^* + \widetilde{A}_3^k r_1 + c_1 \rho_1^k w_1 + \dots + c_N \rho_N^k w_N + \widetilde{A}^k_3 r_0.
    \]
    Hence:
    \[
        B f^{k}(x) = B\left(x^* + \widetilde{A}_3^k r_1\right) + B\left(c_1 \rho_1^k w_1 + \dots + c_N \rho_N^k w_N\right) + B \widetilde{A}_3^k r_0.
    \]
    Now, the sequence $x^* + \widetilde{A}_3^k r_1$ converges to $x^*$ with $Bx^* \vlt \widetilde{\eta}$.
    By Corollary \ref{Corollary: matrix iterations without positive real eigenvalues}, the sequence $B_{j} \widetilde{A}_3^k r_0$ assumes a non-positive value infinitely often for all $j$.
    It is easy to see that the sequence $B_{j}\left(c_1 \rho_1^k w_1 + \dots + c_N \rho_N^k w_N\right)$ is either zero or eventually negative for all $j$.
    It follows that $B_{j} f^{k}(x) \vlt \widetilde{\eta}$ for all sufficiently large $k$.

    The case where all rows of $B$ are linearly dependent is treated as in the proof of Theorem \ref{Theorem: characterisation of robust linear instances}(\ref{Item: robust linear trapped}).

\section{On the Measure of the Boundary Instances}
\label{Section: Measure Boundary}

We now prove Theorem \ref{Theorem: measure of boundary instances} for the Linear Universal Escape Problem.
    The proof for the Affine Universal Escape Problem is analogous.
    We may assume without loss of generality that $A$ is non-singular, that $B$ has full rank, and that all eigenvalues of $A$ are simple,
    for the set of remaining problem instances already has measure zero.
    By Theorem \ref{Theorem: characterisation of robust linear instances}(\ref{Item: robust linear escaping})
    an escaping instance is a boundary instance if and only if it satisfies the formula 
    $
        \exists\lambda\in\sigma_{\ge0}(A).\exists v\in S^{n-1}.
        A v=\lambda v \land B v \vgeq 0.
    $
    The set of all boundary escaping instances is hence contained in the projection of the set 
    \[
        \Set{\left(A, B, \lambda, v\right) \in \R^{n \times n} \times \R^{m \times n}\times \R \times \R^n}
            {\norm{v}_2^2 = 1, \lambda \geq 0, A v = \lambda v, B v \vgeq 0}.
    \]
    If a problem instance further satisfies $\lambda > 0$ and $B v \vgt 0$, then it is trapped.
    Hence, the set of all boundary escaping instances is contained in the projection of the set 
    \begin{align*}
        &\Set{\left(A, B, \lambda, v\right) \in \R^{n \times n} \times \R^{m \times n}\times \R \times \R^n}
            {\norm{v}_2^2 = 1, \lambda = 0, A v = \lambda v}\\
        &\cup 
        \bigcup_{j = 1}^m 
            \Set{\left(A, B, \lambda, v\right) \in \R^{n \times n} \times \R^{m \times n}\times \R \times \R^n}
            {\norm{v}_2^2 = 1, A v = \lambda v, B_j v = 0}.
    \end{align*}
    Using that $A$ and $B$ have full rank, we find that the above is a real algebraic set of dimension $n^2 + mn - 1$.
    Its projection to $\R^{n\times n} \times \R^{m \times n}$ is hence a semi-algebraic set of codimension at most one \cite[Lemma 5.30]{BasuPollackRoy}.
    It follows that the set has zero measure.

    Now, assume that $(A,B)$ is a boundary trapped instance. Then by Theorem \ref{Theorem: characterisation of robust linear instances}(\ref{Item: robust linear trapped}), the instance satisfies the formula
    \[
        \forall \lambda\in\sigma^{\text{odd}}_{>0}({A}).\exists v\in S^{n-1}.
            A v=\lambda v \land \exists j. B_j v\leq0 \land \exists j. B_j v\geq0.
    \]
    Since we assume that $A$ has only simple eigenvalues, $\sigma^{\text{odd}}_{>0}({A}) = \sigma_{>0}(A)$.
    Since $A$ and $B$ have full rank, $A$ must have a positive real eigenvalue, for otherwise the instance is escaping by Corollary \ref{Corollary: matrix iterations without positive real eigenvalues}.
    The instance hence satisfies the formula
    \[
        \exists \lambda\in\sigma_{>0}({A}). \exists v\in S^{n-1}.
        A v=\lambda v \land \exists j.  B_j v\leq0 \land \exists j. B_j v\geq0.
    \]
    It is hence contained in the projection of the set 
    \[
        \Set{(A,B,\lambda, v) \in \R^{n \times n} \times \R^{m \times n}\times \R \times \R^n}{\norm{v}_2^2 = 1, A v=\lambda v, \exists j.{B}_j v\leq0}.
    \]
    Now, if for all $\lambda \in \sigma_{>0}(A)$ and all $v$ with $Av = \lambda v$ there exists $j$ such that $B_j v < 0$, then the instance is robustly escaping by the proof of Theorem \ref{Theorem: characterisation of robust linear instances}(\ref{Item: robust linear escaping}).
    It follows that the instance is contained in the projection of the set
    \[
        \Set{(A,B,\lambda, v) \in \R^{n \times n} \times \R^{m \times n}\times \R \times \R^n}{\norm{v}_2^2 = 1, A v=\lambda v, \exists j.{B}_j v=0}.
    \] 
    This is again a real algebraic set of dimension $n^2 + mn - 1$.
    Its projection is hence a semi-algebraic set of dimension at most $n^2 + mn - 1$.
    In particular, it has zero measure.
\bibliographystyle{alpha}
\bibliography{loops-cie-2026}
\nocite{*}

\end{document}